\newtheorem{theorem}{Theorem}[section]
\newtheorem{lem}{Lemma}[section]
\newtheorem{definition}{Definition}
\newtheorem{assump}{Assumption}
\newtheorem{prop}{Proposition}[section]
\newtheorem{corollary}{Corollary}[section]
\newtheorem{remark}{Remark}
\newcommand{\G}{\mathcal{G}}
\newcommand{\V}{\mathcal{V}}
\newcommand{\E}{\mathcal{E}}
\newcommand{\argmin}{\operatornamewithlimits{argmin}}
\title{Convex Decreasing Algorithms: Distributed Synthesis and Finite-time Termination in Higher Dimension}
\author{James Melbourne$^*$\thanks{$^*$James Melbourne, Vivek Khatana, Sourav Patel, and Murti V. Salapaka are with the Department of Electrical and Computer Engineering, University of Minnesota, Minneapolis, USA.}, Govind Saraswat$^\dagger$\thanks{$^\dagger$Govind Saraswat is with 
the National Renewable Energy Laboratory, Golden, CO, USA (e-mail: govind.saraswat@nrel.gov).}, Vivek Khatana,\\ Sourav Patel, and Murti V. Salapaka$^\ddagger$ 
\thanks{$^\ddagger$This work was authored in part by the National Renewable Energy Laboratory, operated by Alliance for Sustainable Energy, LLC, for the U.S. Department of Energy (DOE) under Contract No. DE-AC36-08GO28308. Funding provided by the Advanced Research Projects Agency-Energy (ARPA-E) under grant no. DE-AR0000701. The views expressed in the article do not necessarily represent the views of the DOE or the U.S. Government. The U.S. Government retains and the publisher, by accepting the article for publication, acknowledges that the U.S. Government retains a nonexclusive, paid-up, irrevocable, worldwide license to publish or reproduce the published form of this work, or allow others to do so, for U.S. Government purposes.  }
}
\begin{document}
\maketitle

\begin{abstract}
We introduce a general mathematical framework for distributed algorithms, and a monotonicity property frequently satisfied in application.  These properties are leveraged to provide finite-time guarantees for converging algorithms, suited for use in the absence of a central authority.  A central application is to consensus algorithms in higher dimension.  These pursuits motivate a new peer to peer convex hull algorithm which we demonstrate to be an instantiation of the described theory.  To address the diversity of convex sets and the potential computation and communication costs of knowing such sets in high dimension, a lightweight norm based stopping criteria is developed.  More explicitly, we give a distributed algorithm that terminates in finite time when applied to consensus problems in higher dimensions and guarantees the convergence of the consensus algorithm in norm, within any given tolerance.  Applications to consensus least squared estimation and distributed function determination are developed.  The practical utility of the algorithm is illustrated through MATLAB simulations.
\end{abstract}


\section{Introduction}
Recent advancements in the field of intelligent communication technologies and ad-hoc wireless sensor networks for monitoring and control of multi-agent systems have necessitated design of algorithms to assert global information without the complete knowledge of the system.
The task of obtaining the global information is often accomplished via a class of algorithms that employ strategies of achieving consensus. In a consensus algorithm, agents iteratively and in a distributed manner agree on a common state. The ideas of distributed consensus algorithms can be traced back to the seminal works, see  \cite{arrow1958decentralization,degroot1974reaching,lynch1996distributed,tsitsiklis1984problems}. Recent works on consensus algorithms are focused on designing protocols to drive agents to the average of their initial states, see  \cite{kempe2003gossip,dominguez2010coordination,hadjicostis2013average,cai2012average}. These protocols were designed for cases where the state of each agent is a scalar value. However, increasing storage and computation capabilities of modern day sensor interfacing technologies have motivated large-scale applications, examples of which include distributed machine learning, see  \cite{predd2009collaborative}, multi-agent control and co-ordination, see  \cite{fax2002information,olfati2007consensus}, distributed optimization problems, see  \cite{nedic2014distributed,khatana2019gradient},  distributed sensor localization, see  \cite{khan2009distributed}. In order to meet the requirements of such applications there is need of distributed consensus algorithms that allow for vector states. \cite{khan2010higher} presents such a higher dimensional consensus protocol. The framework in \cite{khan2010higher} is based on a leader-follower architecture with the agents being partitioned between anchors and sensors. Anchors are agents with  fixed states behaving as leaders in the algorithm, while the sensors change their state by taking a convex combination of their state with the neighboring  nodes' states.

Cyber physical systems such as electrical power networks need to accommodate for large number of states for crucial applications such as state-estimations, optimal dispatch, and demand response for ancillary services. \cite{patel2017distributed} formulates the distributed apportioning problem using consensus protocols where only a single state is shared for each protocol.
A similar situation involving higher dimensional states arises in distributed resource allocation problems where a fixed amount of resource is required to be apportioned among all participating agents in a network  and a convex cost associated with the each resource is to be distributively minimized. The method used to solve the above resource allocation problem involves a higher dimensional consensus protocol (see  \cite{nedic2010constrained}).

Other applications where accommodating higher dimensional data is required are distributed optimization and their applications to deep neural networks such as in the case of diffusion learning where local agents perform model training using local datasets in parallel, for example see \cite{chakraborty2017deep}, where after the completion of the training process, model parameters are then shared with the neighboring nodes allowing for faster dissemination of the model parameters. This method also allows for asynchronous learning, where different agents learn with models of different parameters at every iteration. Recent frameworks for unsupervised learning such as Generative Adversarial Networks (GANs) where the objective is to train learning models to be robust against adversarial attacks (see  \cite{goodfellow2014generative}). \cite{kurakin2016adversarial} propose methods for scaling adversarial training and learning to large datasets performed on a distributed setup where each agent trains a discriminator model on a local training data and a shared generator model is trained based on the feedback received from the discriminator models of the different agents. These applications need sharing of a high dimensional parameter vector, usually of the order of $32 \times 32 \times 3 \sim 1024 \times 1024 \times 3$. Applications such as spectrum sensing in \textit{Ad hoc} cognitive radio networks, see  \cite{li2009distributed}, distributed detection of malicious attacks in finite-time consensus networks \cite{patel2020distributed} and control of autonomous agents like unmanned aerial vehicles (UAVs) for search and survey operations, see  \cite{fax2002information} also depend on implementing higher dimensional consensus protocols. 

Termination of consensus algorithms in finite time provides an advantage of getting an approximate consensus while saving valuable computation and communication resources. For the scalar average consensus protocols discussed earlier, the authors in \cite{yadav2007distributed,ratio_consensus_lab} have proposed such a finite time stopping criteria utilizing two additional states namely the global maximum and global minimum over the network. This allows each agent to distributively detect the convergence to the (approximate) average and terminate further computations. The works in  \cite{saraswat2019distributed,prakash2019distributed} generalize this result to the cases of dynamic interconnection topology and communication delays. The authors in \cite{sundaram2007finite} present a method based on the minimal polynomial associated with the weight matrix in the state update iterations to achieve the consensus value in a finite number of iterations. However, to calculate the coefficients of the minimal polynomial each node has to run $n$ (total number of agents) different linear iterations each for at least $n+1$ time-steps.

The protocol developed here is based on a new geometric insight into the behavior of push-sum or ratio consensus algorithms that is of independent interest.  We prove that many popular consensus algorithms fit into a general class of algorithms which we call ``convex decreasing''.  We demonstrate that convex decreasing consensus algorithms satisfy monotone convergence properties, which we leverage to give guarantees on network configurations.  That is, if it can be determined that all nodes of convex decreasing consensus algorithm belongs to a convex set, then all nodes of the algorithm will remain in said convex set.  We develop distributed stopping criteria based on this geometric insight.

Centralized algorithms for finding the convex hull of a finite set of $n$ points in a plane have been long proposed. Such algorithms have the worst-case running time of $\mathcal{O}(n~\log n)$, which is also the best achievable performance for obtaining the ordered hull (see \cite{preparata1979optimal}). However, with the recent advancements in distributed multi-agent systems, the problem of estimating the convex hull in a distributed manner, in the absence of a centralized entity, has become important.
To this regard,  distributed algorithms to estimate convex hull have been proposed in the literature for applications such as in classification problems, locational region estimation and formation control to name a few. In classification problems, \cite{fernandez2017one,casale2014approximate} have proposed an one-class scaled distributed convex hull algorithm where the geometric structure of the convex hull is used to model the boundary of the target class defining the one-class. However an approximation of the convex hull in the original large dimensional space by means of  randomly projected decisions on 2-D spaces results in a residual error when applied to finite time applications as analyzed in  \cite{kavan2006fast}. 
In order to obtain the convex hull of the feature space (kernel space) by communicating the extremities, \cite{osuna2002convex} employed a quadratic programming approach. However, the computational complexity of the proposed solution is limiting when extended to higher dimensions. \cite{kim2015distributed} proposed a convex hull algorithm, but require the assumption that the feature space is generated from a Gaussian mixture model and thus is limited to applications of a special class of support vector machines.   A difference in paradigm between the convex hull estimation pursued here and the literature on distributed programming for computation of convex hulls, is that the current article constructs a protocol that can be implemented in a plug and play manner in the absence of a central authority or knowledge of the network.  

The problem of computing a specified function of the sensor measurements is common in wireless sensor networks, see \cite{giridhar2005computing}, \cite{sundaram2008distributed}. The setup in \cite{giridhar2005computing} focuses on the problem of determining an arbitrary function of the sensor measurements from a specified sink node. The article studies the maximum rate at which the function can be calculated and communicated to the sink node. The article provides a characterization of the achievable rates for thee different classes of functions. The authors in \cite{sundaram2008distributed} proposed a method to calculate any arbitrary function in networks utilizing a linear iteration. The authors show that the proposed method can be modeled as a dynamical system. Based on the structured system observability approach it is shown that the linear iterations can determine any specified function of the initial values of the nodes after observing the evolution of the agent values for a large but finite time-steps. However, the method is based on forming observability  matrices for all agents in the network; here the method poses limitations as the size of the network increases leading to a large amount of computation and storage requirement. Further, the linear iterations require doubly-stochastic matrices for the agent state updates. The need of doubly-stochastic matrices makes the method in \cite{sundaram2008distributed} not applicable to directed networks. In this article we propose a distributed algorithm to determine any arbitrary function of the initial values of the agents, applicable to general connected directed graph topologies. Moreover, the proposed method has a fixed storage requirement and the communication overhead of the proposed method compared to the existing methods is insignificant.

In this article, we present a distributed stopping criteria for the higher dimensional consensus problem. Our first progress in this direction can be found in \cite{melbourne2020geometry}, where we investigated distributed stopping algorithm for the special case of ``ratio consensus''. Here we present a general theory of convex monotonicity, and demonstrate that one can apply the  distributed termination techniques to any algorithm satisfying this criteria.   In particular we show that in popular consensus algorithms (ratio consensus and row stochastic for example), the evolution of the convex hull of network states (in any dimension) indexed by time form a nested sequence of convex sets. This motivates an algorithm for distributively computing the convex hull within a time that scales linear with diameter of the network. We further provide a simpler algorithm which guarantees the convergence of consensus algorithm in norm, within any given tolerance.

\textit{ Statement of contribution:}
\begin{enumerate}
    \item This article constructs a general mathematical framework for convergence of network algorithms. In particular a notion of monotoniticity satisfied by important consensus algorithms is introduced.  We show ratio consensus \cite{kempe2003gossip} and row stochastic updating of a network to be examples of ``convex decreasing'' algorithms.
    
    \item  A convex hull algorithm, of independent interest, is developed for distributed determination of the extreme points of a set of vectors in the absence of a central authority.  In the context of a distributed convex decreasing algorithm, the hull algorithm can be used by the agents to obtain the convex hull by a fixed time $t$, and thus give guarantees on the state of the convex decreasing algorithm for all times $t' \geq t$.  
    
    \item Feasibility concerns for convex hull computation are addressed for high dimensional data, and an alternative lightweight (in the sense of both computational and communication cost) stopping criterion is given that guarantees finite convergence within an $\varepsilon$-threshold of consensus with respect to an arbitrary norm.
    
    \item As application of the theory developed, new stopping criteria are developed for consensus based least square estimation as well as distributed function calculation that give the agents convergence guarantees in a peer to peer network. 

\end{enumerate}

The rest of the paper is organized as follows. In Section~\ref{sec:DefProb}, the basic definitions needed for subsequent developments are presented. Further, we discuss the setup for the distributed average consensus in higher dimensions (called the vector consensus problem) using ratio consensus. Sections~\ref{sec:ConHull} presents an analysis on the polytopes of the network states generated in the ratio consensus algorithm.
Section~\ref{sec:normBase} establishes a norm-based finite-time termination criterion for the vector consensus problem. Theoretical findings are validated with simulations presented in Section~\ref{sec:results} followed by conclusions in Section~\ref{sec:Conclusion}.

\section{Definitions, and Problem Statement}\label{sec:DefProb}

\begin{subsection}{Definitions and Notations}
In this section we present basic notions of graph theory and linear algebra which are essential for the subsequent developments. Detailed description of graph theory and linear algebra notions are available in \cite{Die06}, and \cite{horn2012matrix} respectively.  We will also develop a general mathematical framework for consensus algorithms, and introduce a notion of convex monotonicity which will be crucial in the development of stopping criterion for vector consensus.

\begin{definition}(Cardinality of a set)
Let $A$ be a set. The cardinality of a set $A$ denoted by $|A|$ is the number of elements of the set $A$. 
\end{definition}

\begin{definition}(Directed Graph)
A directed graph (denoted as digraph) $\G$ is a pair $(\V,\E)$ where $\V$ is a set of vertices or nodes and $\E$ is a set of edges, which are ordered subsets of two distinct elements of $\V$. If an edge from $j \in \V$ to $i \in \V$ exists then it is denoted as $(i,j)\in \E$.  
\end{definition}
\begin{definition}(Path) 
In a directed graph, a directed path from node $i$ to $j$
exists if there is a sequence of distinct directed edges of $\G$ of
the form $(k_{1},i),(k_{2},k_{1}),...,(j,k_{m}).$ For the rest of the article, a path refers to a directed path.
\end{definition}

\begin{definition}(Path Length)
The path length, or length of a path is the number of directed edges belonging to the path.  By convention, we consider a node $i \in V$ to be connected to itself by a path of length zero.

\end{definition}
\begin{definition}(Strongly Connected Graph) A directed graph is strongly connected if to every $i,j \in V$ there exists a directed
path from node $i$ to node $j$.
\end{definition}
\begin{definition}(In-Neighborhood) \label{def: in-neighborhood} Set of in-neighbors of node $i \in \mathcal{V}$ is denoted by $N^-_i = \{j \ | \ (i,j)\in \mathcal{E}\}.$ In this article, we assume $(i,i) \in N^-_i$ for all $i \in \V.$
\end{definition}

\begin{definition}($m$-In-Neighborhood)
For $m=0$ define $N^-_i(0) = i$, and for $m >0$ define $N^-_i(m)$ to be $\cup_{j \in N_i^-(m-1)} N^-_j$, so that $N^-_i(m)$ is the set of nodes $j \in G$ from which $i$ can be reached in $m$ or less steps.
\end{definition}

\begin{definition}(Diameter of a Graph)
The diameter of a graph is the longest shortest path between any two nodes in the network. We will consider $D$ as an upper bound on the diameter of the graph throughout the rest of the article.
\end{definition}

\begin{definition}(Network State)
    For a vector space $W$, a $W$-valued network state is a function $f: {\mathcal{V}} \to W$, which we denote by $W^{\mathcal{V}}$.
\end{definition}

\begin{definition}(Network Update)
      A network update on $G = G(\mathcal{V},\mathcal{E})$, is a map $\phi: W^{\mathcal{V}} \to W^{\mathcal{V}}$.  
\end{definition}

\begin{definition}(Network and Consensus algorithms)
    A discrete time network algorithm on $G = G(\mathcal{V},\mathcal{E})$, is a finite or countably infinite sequence of maps of network updates.   When $W$ is endowed with a norm $\| \cdot \|$, a consensus algorithm is a network algorithm such that for any $f \in W^{\mathcal{V}}$,  \begin{align}\label{eq: phi_n update}
    \Phi_n (f) \coloneqq \phi_n \circ \phi_{n-1} \circ \cdots \circ \phi_1(f)
    \end{align}
    satisfies,
    \begin{align*}
       \lim_{n \to \infty} \| \Phi_n(f)(i)-  \Phi_n(f)(j) \| = 0,
    \end{align*}
    for all $i,j \in V$.
\end{definition}

We will only consider discrete time network algorithms in this work, a discrete time consensus algorithm will be referred to a consensus algorithm hereafter. 

\begin{definition}(Distributed Network Update and Algorithm)
    A network update $\phi$, is distributed if $f,g \in W^{\mathcal{V}}$ satisfying $f(j) = g(j)$ for $j \in N^-_i(1)$ implies 
    $
        \phi(f)(i) = \phi(g)(i).
   $
    A network algorithm $\{\phi_k\}_k$, is distributed if $\phi_k$ is a distributed network update for every $k$.
\end{definition}

For $i \in \mathcal{V}$, define $\pi_i : W^{\mathcal{V}} \to W^{N_i^-(i)}$ to be the restriction of $f$ to $N_i^-(1)$.  Explicitly, for $f \in W^{\mathcal{V}}$ and $j \in N_i^-(i)$, $\pi_{i}(f)(j) \coloneqq f(j)$.  Also, define $\epsilon_i : W^{N_i^-(i)} \to W^{\mathcal{V}}$ by
\begin{align*}
    \epsilon_i(x)(j) = \begin{cases}
                x(j) &\text{ when } j \in N_i^-(1),
                    \\
                0 &\text{ else, }
                \end{cases}
\end{align*}
for $x \in W^{N_i^-(1)}$.

In the following proposition, we gives as alternative formulation of the fact that distributed updates are determined locally.

\begin{prop}
 A network update $\phi$ is distributed if and only if for every $i$, $f \mapsto \phi(f)(i)$ can be expressed as  $\psi_{i}(\pi_{i}(f)) $ for a function $\psi_{i}: W^{N_i^-(1)} \to W$. 
\end{prop}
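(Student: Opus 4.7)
The plan is to prove the equivalence by direct construction in both directions, exploiting the fact that $\pi_i$ is surjective (since one may extend any $x \in W^{N_i^-(1)}$ via the embedding $\epsilon_i$).

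For the ``if'' direction, I would assume there exists $\psi_i : W^{N_i^-(1)} \to W$ with $\phi(f)(i) = \psi_i(\pi_i(f))$ for all $f$. Then for any $f,g \in W^{\mathcal{V}}$ with $f(j) = g(j)$ for every $j \in N_i^-(1)$, I would observe that $\pi_i(f) = \pi_i(g)$ by the definition of $\pi_i$ as the restriction map, so that $\phi(f)(i) = \psi_i(\pi_i(f)) = \psi_i(\pi_i(g)) = \phi(g)(i)$. This is precisely the distributed property, so this direction is immediate.

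For the ``only if'' direction, I would define $\psi_i : W^{N_i^-(1)} \to W$ by $\psi_i(x) \coloneqq \phi(\epsilon_i(x))(i)$, using the embedding $\epsilon_i$ introduced in the paper. The key claim is that with this definition, $\psi_i(\pi_i(f)) = \phi(f)(i)$ for every $f \in W^{\mathcal{V}}$. To verify this, note that $\epsilon_i(\pi_i(f))$ and $f$ agree on $N_i^-(1)$: by construction $\epsilon_i(\pi_i(f))(j) = \pi_i(f)(j) = f(j)$ for $j \in N_i^-(1)$. The distributed hypothesis then yields $\phi(\epsilon_i(\pi_i(f)))(i) = \phi(f)(i)$, giving $\psi_i(\pi_i(f)) = \phi(f)(i)$ as desired.

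There is no real obstacle here; the only subtlety worth being careful about is well-definedness of $\psi_i$ on its entire domain $W^{N_i^-(1)}$ (not merely on the image of $\pi_i$, which in fact is already all of $W^{N_i^-(1)}$), which is handled cleanly by the explicit formula $\psi_i(x) = \phi(\epsilon_i(x))(i)$. The proposition is essentially a bookkeeping statement making precise that ``distributed'' as defined in terms of agreement on in-neighborhoods is the same as ``locally computable from in-neighborhood data,'' and the proof amounts to unwinding the definitions of $\pi_i$ and $\epsilon_i$.
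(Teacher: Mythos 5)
Your proposal is correct and follows essentially the same route as the paper: the ``if'' direction uses that $\pi_i(f)=\pi_i(g)$ whenever $f$ and $g$ agree on $N_i^-(1)$, and the ``only if'' direction constructs $\psi_i(x)=\phi(\epsilon_i(x))(i)$ and checks that $\epsilon_i(\pi_i(f))$ agrees with $f$ on $N_i^-(1)$. No gaps.
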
 
\begin{proof}
First suppose a function $\psi_{i}$ satisfying $\psi_{i}(\pi_i(f)) = \phi(f)(i)$ exists, and that $f$ and $g \in W^{\mathcal{V}}$ satisfy, $f_j = g_j$ for $j \in N_i^-(1)$, then $\pi_i(f) = \pi_i(g)$. Thus $\phi(f)(i) = \psi_{i}(\pi_i(f)) = \psi_{i}(\pi_i(g)) = \phi(g)(i)$.  Conversely, assume $\phi$ is a distributed network update and define for $x \in W^{N_i^-(1)}$ $\psi_i(x) = \phi(\epsilon_i(x))(i)$. Observe that for $j \in N_i^-(1)$, $f(j) = \epsilon_i({\pi_{i}(f)})(j)$, hence by the definition of a distributed network update \begin{align}\label{eq: use of distributed defn} \phi(f)(i) = \phi(\epsilon_i({\pi_{i}(f)}))(i).
\end{align}Further by the definition of the function $\psi_{i}$,
\begin{align} \label{eq: defn of psi function}
 \phi(\epsilon_i({\pi_{i}(f)}))(i) = \psi_{i}(\pi_{i}(f)).
\end{align}
Combining \eqref{eq: use of distributed defn} and \eqref{eq: defn of psi function} completes the proof.
\end{proof}

Observe that in the case that the $\psi_{i}$ is linear in $f$ in the sense that $\psi_{i}(f) = \sum_{j \in N_i^-(1)} \lambda_{ij} f_j$ for $\lambda_j \in \mathbb{R}$, then $\phi_k(f)$ can be represented by a matrix ${ \bf \Lambda_k} = \left( \lambda_{ij} \right)$.  Conversely, $\phi$ represented by matrices, clearly induce linear $\psi_{i}$.

We will be concerned with linear consensus algorithms, those that can be build from matrix operations.  That is when $\phi_n$ can be represented by a matrix ${\bf \Lambda(n)}$, in the sense that
\begin{align*}
    (\phi_n f)(i) = \sum_{ij} \Lambda_{ij}(n) f(j),
\end{align*}
we will write ${\bf \Lambda(n)}f$ in place of $\phi_n(f)$.  Moreover for brevity for exposition, our focus will be on the case that our dynamics are time homogeneous in the sense that ${\bf \Lambda(n)} = {\bf \Lambda}$.

\begin{definition}(Column Stochastic Matrix) A real $N\times N$ matrix ${ \bf P}=[p_{ji}]$ is called a column stochastic matrix if $ p_{ji} \geq 0$ for $1\leq i,j\leq N$ and $\sum_{j=1}^{N}p_{ji}=1$ for $1\leq i\leq N.$ 
\end{definition}

\begin{definition}(Row Stochastic Matrix) A real $N \times N$ matrix ${ \bf A}=[a_{ij}]$ is called a row stochastic matrix if $1 \geq a_{ij} \geq 0$ for $1\leq i,j\leq N$ and $\sum_{j=1}^{N}a_{ij}=1$ for $1\leq i\leq N.$ 
\end{definition}


\begin{definition}(Irreducible Matrix)
	A $N \times N$ matrix $A$ is said to be irreducible if for any $1 \leq i, j \leq N$, there exist $m \in \mathbb{N}$ such that $(\textbf{A}^m)(i,j) > 0$, that is, it is possible to reach any state from any other state in a finite number of hops.
\end{definition}

\begin{definition}(Primitive Matrix) A non negative matrix ${\bf A}$ is primitive if it is irreducible and has only one eigenvalue of maximum modulus. 
\end{definition}

As a notational convention, matrices will be written in bold face as above.

\begin{definition}(Convex hull) \label{def: convex hull}
   For a set $U \subseteq W$, the convex hull of $U$ is the smallest convex set containing $U$,
    \begin{equation}
    co(U) = \bigcap_{\{F \hbox{ convex }: U \subseteq F \}} F.
    \end{equation}
\end{definition}

The topological closure of a set $U$, will be denoted $$\bar{U} = \bigcap_{\{F \hbox{ closed }: U \subseteq F \}} F,$$ the closure of a convex hull, will be denote $\overline{co}(U) \coloneqq \overline{ co(U) }$.

For $f \in W^{\mathcal{V}}$ we consider $co(f)$ to be the convex hull of $f$, when considered as a set of elements of $W$ indexed by $\mathcal{V}$.  More explicitly if we denote the simplex by
$
    \mathcal{S}_n \coloneqq \left\{ t \in \mathbb{R}^n : t_i \geq 0, \sum_{j=1}^n t_i = 1 \right\}
$
then for $f \in W^{\mathcal{V}},$ 
\begin{align}\label{eq: explicit convex hull of a state}
    co(f) \coloneqq \left\{ x \in W : x = \sum_{i \in V} t_i f(i), \hbox{ for } t \in \mathcal{S}_{|\mathcal{V}|} \right\}
\end{align}

\begin{definition}(Extreme point) \label{def: Extreme point}
For a convex set $U \subseteq W$ define $u \in U$ to be an extreme point of $U$, denoted $\mathscr{E}(U)$, if $u = \frac{u_1 + u_2} 2$ for $u_i \in U$ implies $u_1=u_2=u$.  For a general $U$, define $\mathscr{E}(U) \coloneqq \mathscr{E}(co(U))$.
\end{definition}
We will also have use for the following 

\begin{definition}
For a norm $\|\cdot \|$ and a set $U \subseteq W$ define the diameter of $U$ with respect to the norm $\|\cdot \|$,
$diam_{\|\cdot \|}(U) = \sup_{x,y \in U} \|x-y\|$.
\end{definition}

\begin{definition}(Convex Decreasing)
A sequence of sets $S_n \subseteq W$ is convex decreasing if $co(S_{n+1}) \subseteq co(S_{n}) $.  An $W$ consensus algorithm is convex decreasing when the sets $S_n = \Phi_n(f)$ (with $\Phi_n(f)$ defined as in \eqref{eq: phi_n update}, and the convex hull of an element of $W^{\mathcal{V}}$ defined as in \eqref{eq: explicit convex hull of a state}) are convex decreasing for any $f \in W^{\mathcal{V}}$.
\end{definition}



Our primary interest is in the case that $W = \mathbb{R}^d$ and for this case
we now recall a standard tool from Convex Geometry, the support function of a set, which we can use to give an analytic description of a Convex Decreasing sequence of sets.   For $x,y \in \mathbb{R}^n$, we use the notation $\langle x,y \rangle = x^T y = \sum_{i=1}^n x_i y_i$, where we use $(\cdot )^T$ to denote the usual transpose operation.

\begin{definition}(Support function)
    For a non-empty set $A \subseteq \mathbb{R}^d$, define its support function
    \begin{align*}
        h_A(u) \coloneqq \sup_{x \in A} \langle x, u \rangle.
    \end{align*}
\end{definition}

\begin{prop} \label{prop: basics of support functions}
    Support functions satisfy the following:
    \begin{enumerate}
        \item $A \subseteq B$ implies $h_A \leq h_B$ \label{item: inclusion}
        \item $h_A = h_{co(A)}$\label{item: support function equals support function of hull}
        \item $h_{A} = h_{\bar{A}}$ \label{item: h invariant under closure}
        \item \label{item: inclusion reversal for closed convex hull}  $h_A \leq h_B$ implies $\overline{co}(A) \subseteq \overline{co}(B)$. 
        \item A sequence of compact sets $\{A_n\}$ is convex decreasing if and only if $h_{A_n} \geq h_{A_{n+1}}$ holds for all $n$. \label{item: convex decreasing in terms of support functions}
    \end{enumerate}
\end{prop}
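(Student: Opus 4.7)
The plan is to establish the five items in the order given, using each as a building block for the next, and to rely on the finite-dimensional Hahn--Banach (hyperplane separation) theorem at the single nontrivial step.

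Items \ref{item: inclusion}, \ref{item: support function equals support function of hull}, and \ref{item: h invariant under closure} are elementary. For (\ref{item: inclusion}), the supremum over a larger set is larger. For (\ref{item: support function equals support function of hull}), the inclusion $A \subseteq co(A)$ together with (\ref{item: inclusion}) gives $h_A \leq h_{co(A)}$, and the reverse inequality follows because any $z \in co(A)$ can be written $z = \sum_{i=1}^{k} t_i x_i$ with $x_i \in A$ and $t \in \mathcal{S}_k$, so linearity of $\langle \cdot, u \rangle$ gives $\langle z, u \rangle = \sum_i t_i \langle x_i, u \rangle \leq \max_i \langle x_i, u \rangle \leq h_A(u)$. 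For (\ref{item: h invariant under closure}), $A \subseteq \bar A$ gives one inequality, while continuity of $\langle \cdot, u \rangle$ together with the density of $A$ in $\bar A$ gives the other.

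Item (\ref{item: inclusion reversal for closed convex hull}) is the substantive step. I would argue by contrapositive: suppose some $x_0 \in \overline{co}(A)$ fails to lie in $\overline{co}(B)$. Since $\overline{co}(B)$ is a closed convex subset of $\mathbb{R}^d$, hyperplane separation produces $u \in \mathbb{R}^d$ and $\alpha \in \mathbb{R}$ such that $\langle x_0, u \rangle > \alpha \geq \langle y, u \rangle$ for all $y \in \overline{co}(B)$. Taking the sup over $y$ and combining with items (\ref{item: support function equals support function of hull}) and (\ref{item: h invariant under closure}) yields $\langle x_0, u \rangle > h_{\overline{co}(B)}(u) = h_B(u) \geq h_A(u)$, where the last inequality is our hypothesis. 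But by the same two items applied to $x_0 \in \overline{co}(A)$ we get $\langle x_0, u \rangle \leq h_A(u)$, a contradiction. This separation argument is the only nonroutine ingredient, and I expect it to be the main obstacle, primarily because it relies on invoking a separation theorem rather than a direct computation.

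Finally, item (\ref{item: convex decreasing in terms of support functions}) combines what has already been established with the compactness hypothesis. The forward direction is immediate from (\ref{item: inclusion}) and (\ref{item: support function equals support function of hull}) applied to the inclusion $co(A_{n+1}) \subseteq co(A_n)$. For the converse, item (\ref{item: inclusion reversal for closed convex hull}) gives $\overline{co}(A_{n+1}) \subseteq \overline{co}(A_n)$, and since each $A_n$ is compact in $\mathbb{R}^d$, Carath\'eodory's theorem shows that $co(A_n)$ is the continuous image of the compact set $\mathcal{S}_{d+1} \times A_n^{d+1}$, hence compact and therefore closed. Thus $\overline{co}(A_n) = co(A_n)$, and the convex-decreasing condition follows.
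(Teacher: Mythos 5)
Your proposal is correct and follows essentially the same route as the paper: direct computation for items (1)--(3), hyperplane separation for item (4), and compactness of the convex hull of a compact set to upgrade $\overline{co}$ to $co$ in item (5). The only difference is cosmetic --- you justify that compactness via Carath\'eodory's theorem, where the paper simply asserts it.
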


\begin{proof}
Observe that $A \subseteq B$ implies, $h_A(u) = \sup_{a \in A} \langle a, u \rangle \leq \sup_{b \in B} \langle b , u \rangle = h_B(u)$ giving \eqref{item: inclusion}.  

If $x \in co(A)$ then $x = \sum_{i=1}^n t_i a_i$ for some $t \in \mathcal{S}_n$ and $a_i \in A$. Thus $\langle x, u \rangle = \sum_i t_i \langle a_i , u \rangle \leq \sum_i t_i \sup_{a \in A} \langle a, u \rangle = h_A(u)$.  Thus $h_{co(A)}(u) \leq h_{A}(u)$, while the opposite inequality follows from \eqref{item: inclusion}, giving \eqref{item: support function equals support function of hull}.

If $x \in \bar{A}$ then $x = \lim_n a_n$ for $a_n \in A$, so that
$\langle x , u \rangle = \lim_n \langle a_n, u \rangle \leq h_A(u)$. Thus $h_{\bar{A}}(u) \leq h_A(u)$.  The opposite inequality follows from \eqref{item: inclusion}, and \eqref{item: h invariant under closure} follows.

Suppose that $A$ and $B$ are closed convex sets such that $h_A \leq h_B$, and take $a \in A - B$. Then, by the hyperplane seperation theorem \cite{rockafellar1970convex}, there exists $u$ such that   $\langle a, u\rangle > \sup_{b \in B} \langle b , u \rangle = h_B(u)$.  This would be a contradiction on $h_A \leq h_B$, so we must have $A \subseteq B$.  For general $A$ and $B$, we need only recall from \eqref{item: support function equals support function of hull} and \eqref{item: h invariant under closure} that $h_{A} = h_{\overline{co}(A)}$ and $h_B = h_{\overline{co}(B)}$ and apply the previous to the closed convex hull to obtain $\overline{co}(A) \subseteq \overline{co}(B)$.  Thus \eqref{item: inclusion reversal for closed convex hull} follows.

To prove \eqref{item: convex decreasing in terms of support functions} observe that if $\{A_n\}$ is a convex decreasing sequence, by definition $co(A_{n+1}) \subseteq co(A_{n})$.  So that $h_{A_n} = h_{co(A_{n})} \geq h_{co(A_{n+1})} = h_{A_{n+1}}$.  Conversely, if $h_{A_n} \geq h_{A_{n+1}}$ then by \eqref{item: inclusion reversal for closed convex hull}, $\overline{co}(A_{n+1}) \subseteq \overline{co}(A_n)$, and since the $A_n$ are assumed compact their convex hulls are as well, and hence $co(A_{n+1}) \subseteq co(A_n)$.
\end{proof}

We will also have use for a few basic results from Convex Geometry, which we collect bellow.
\begin{lem} \label{lem: basic convex geo}
    For $K$ convex and compact $co(\mathscr{E}(K)) = K$.  
    For $A_\alpha \subseteq \mathbb{R}^d$, then $co\left( \cup_\alpha A_\alpha \right) = co \left( \cup_\alpha co(A_\alpha) \right)$
\end{lem}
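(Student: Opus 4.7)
My plan is to handle the two claims separately, since they are essentially independent.

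For the first statement, $co(\mathscr{E}(K)) = K$, I would prove this as a finite-dimensional version of the Minkowski/Krein--Milman theorem by induction on the affine dimension $d$ of $K$. The inclusion $co(\mathscr{E}(K)) \subseteq K$ is immediate from $\mathscr{E}(K) \subseteq K$ and the convexity of $K$. For the reverse inclusion, the base case $d = 0$ is trivial since $K$ is a singleton and that point is its own unique extreme point. For the inductive step, fix $x \in K$ and split into two cases. If $x$ is a relative boundary point of $K$, then by the supporting hyperplane theorem (a consequence of the hyperplane separation result already invoked in the proof of Proposition~\ref{prop: basics of support functions}) there is a supporting hyperplane $H$ at $x$, and the face $F = K \cap H$ is a compact convex set of strictly lower affine dimension whose extreme points satisfy $\mathscr{E}(F) \subseteq \mathscr{E}(K)$. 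The inductive hypothesis applied to $F$ gives $x \in co(\mathscr{E}(F)) \subseteq co(\mathscr{E}(K))$. If instead $x$ lies in the relative interior of $K$, pick any line through $x$ lying in the affine hull of $K$; compactness forces it to exit $K$ at two boundary points $y_1, y_2$, each of which lies in $co(\mathscr{E}(K))$ by the boundary case, and then $x$ is a convex combination of $y_1$ and $y_2$.

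For the second statement, I would argue purely by the monotonicity and idempotence of the convex hull operator. Since $A_\alpha \subseteq co(A_\alpha)$ for each $\alpha$, taking unions gives $\cup_\alpha A_\alpha \subseteq \cup_\alpha co(A_\alpha)$, and applying $co(\cdot)$ yields the inclusion $co(\cup_\alpha A_\alpha) \subseteq co(\cup_\alpha co(A_\alpha))$. Conversely, for each fixed $\alpha$, $A_\alpha \subseteq \cup_\beta A_\beta \subseteq co(\cup_\beta A_\beta)$, and since $co(\cup_\beta A_\beta)$ is convex, Definition~\ref{def: convex hull} gives $co(A_\alpha) \subseteq co(\cup_\beta A_\beta)$. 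Taking the union over $\alpha$ and then the convex hull, and using that the convex hull of a subset of a convex set is contained in that convex set, yields $co(\cup_\alpha co(A_\alpha)) \subseteq co(\cup_\beta A_\beta)$.

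The only step with any real content is the first one; the second part is essentially bookkeeping with the definition. Within the first part, the main delicacy is choosing the right two-case split (boundary versus relative interior) and invoking a supporting hyperplane to reduce dimension — everything else is standard. In a finite-dimensional exposition intended for engineers, it is probably cleanest to either cite Minkowski's theorem directly or sketch the induction in a sentence or two rather than spelling out the separation argument in full.
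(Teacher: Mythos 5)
Your proof is correct and matches the paper's treatment: for the second claim the paper gives exactly the same two-inclusion argument via monotonicity and minimality of the hull, and for the first claim the paper simply cites \cite{rockafellar1970convex} (Minkowski's theorem, with Krein--Milman as the infinite-dimensional analogue) rather than proving it. Your induction on affine dimension is the standard proof of that cited fact, so there is no substantive difference in approach.
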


\begin{proof}
The first result is standard (see \cite{rockafellar1970convex}), its infinite dimensional generalization is the Krein-Milman Theorem (see \cite{Lax2002functionalanalysis}).  For the second result, clearly $\cup_\alpha co(A_\alpha) \supseteq \cup_\alpha A_\alpha$ so that $co (\cup_\alpha co(A_\alpha)) \supseteq co(\cup_\alpha A_\alpha)$. The reverse inequality follows by fixing $\alpha'$ and observing $co(\cup_\alpha A_\alpha) \supseteq co(A_{\alpha'})$, which implies $co(\cup_\alpha A_\alpha) \supseteq \cup_\alpha co(A_\alpha)$.  Since $co(\cup_\alpha co(A_\alpha))$ is the smallest convex set containing $\cup_\alpha co(A_\alpha)$ the proof is complete.
\end{proof}
\end{subsection}

\begin{subsection}{Vector Consensus framework}\label{vectorConframework}
Here, we extend a key result from \cite{kempe2003gossip,dominguez2010coordination} where a ratio of two states was maintained to reach average consensus.
We consider the network topology to be represented by a directed graph $\mathcal{G}(\mathcal{V},\mathcal{E})$ containing $|\mathcal{V}|<\infty$ nodes and satisfies the following assumptions throughout the rest of the paper.
\begin{assump}\label{ass:StrConn}
The directed graph $\mathcal{G}(\mathcal{V},\mathcal{E})$ representing the agent interconnections is strongly-connected.
\end{assump}

\begin{assump}\label{ass:colSto}
Let ${\bf P}=[p_{ji}]$ be a primitive column stochastic matrix with digraph $\mathcal{G}(\mathcal{V}, \mathcal{E})$ with $p_{ji}>0$ if and only if $(i,j) \in \mathcal{E}$.
\end{assump}


\begin{theorem}
    A sequence of matrices ${\bf A_n}$ defines a scalar consensus algorithm if and only if it defines an $\mathbb{R}^d$ vector consensus algorithm.
\end{theorem}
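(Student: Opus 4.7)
The plan is to exploit linearity: the same matrix sequence $\mathbf{A}_n$ acts on an $\mathbb{R}^d$-valued network state coordinate-by-coordinate. More precisely, for $f \in (\mathbb{R}^d)^{\mathcal{V}}$ let $f_k : \mathcal{V} \to \mathbb{R}$ denote the $k$-th coordinate, $f_k(i) = f(i)_k$. Since matrix multiplication distributes across coordinates, I would first establish the identity
\begin{align*}
    \Phi_n(f)(i)_k = \Phi_n(f_k)(i)
\end{align*}
for each $k \in \{1,\dots,d\}$, where on the right $\Phi_n$ denotes the scalar algorithm generated by the same matrices $\mathbf{A}_n$. This is a direct unwinding of $(\Phi_n f)(i) = \sum_j [\mathbf{A}_n \cdots \mathbf{A}_1]_{ij} f(j)$ in $\mathbb{R}^d$.

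With this identity in hand, the reverse direction ($\mathbb{R}^d$ consensus $\Rightarrow$ scalar consensus) is immediate by specializing to $d=1$, since $\mathbb{R}^1 = \mathbb{R}$ and any norm on $\mathbb{R}$ is a scalar multiple of $|\cdot|$.

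For the forward direction, fix any $f \in (\mathbb{R}^d)^{\mathcal{V}}$ and $i,j \in \mathcal{V}$. Applying the scalar consensus hypothesis to each coordinate slice $f_k$ (which is an admissible scalar network state) gives $|\Phi_n(f_k)(i) - \Phi_n(f_k)(j)| \to 0$ as $n \to \infty$ for every $k$. Since all norms on $\mathbb{R}^d$ are equivalent, there is a constant $C$ with $\|v\| \leq C \max_k |v_k|$ for $v \in \mathbb{R}^d$, and hence
\begin{align*}
    \|\Phi_n(f)(i) - \Phi_n(f)(j)\| \leq C \max_{k} |\Phi_n(f_k)(i) - \Phi_n(f_k)(j)| \longrightarrow 0.
\end{align*}
As $f$, $i$, $j$ were arbitrary, vector consensus follows.

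Neither direction presents a real obstacle; the content of the statement is largely the observation that linear matrix dynamics preserve the coordinate decomposition of $\mathbb{R}^d$, combined with finite-dimensional norm equivalence. The only book-keeping care required is to verify that the scalar hypothesis is genuinely applied to \emph{every} coordinate slice $f_k$ of an arbitrary vector state, and to note that the choice of norm on $\mathbb{R}^d$ is immaterial to the definition of a consensus algorithm in finite dimension.
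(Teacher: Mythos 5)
Your proof is correct and follows essentially the same route as the paper's: the coordinate identity $\Phi_n(f)(i)_k = \Phi_n(f_k)(i)$ combined with the equivalence of norms on $\mathbb{R}^d$. The only nuance is the reverse direction when $d>1$, where rather than ``specializing to $d=1$'' one should embed an arbitrary scalar state into a single coordinate of an $\mathbb{R}^d$-valued state and invoke the lower bound $c\max_k |v_k| \leq \|v\|$ (this is precisely what the paper's two-sided inequality supplies), but your machinery already contains everything needed for that one-line fix.
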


\begin{proof}
The value of $i$-th node in the $l$-th coordinate after $n$ iterations is the application of $n$-iterations to the $l$-th coordinate function evaluated at the $i$-th node,  $(\prod_{k=1}^n {\bf A_k})(f)(i)_l = (\prod_{k=1}^n {\bf A_k})(f_l) (i)$.  Hence the theorem follows from the existence of $c$ and $C >0$ (dependent on dimension and choice of norm $\| \cdot \|$) such that
\begin{align*}
    c \max_l |\Phi_n(f)(i)_l &- \Phi_n(f)(j)_l | \leq
    \| \Phi_n(f)(i) - \Phi_n(f)(j) \| 
        \\
       & \leq 
            C \max_l |\Phi_n(f)(i)_l - \Phi_n(f)(j)_l |
\end{align*}
with the fact above that $\Phi_n(f_l)(i) = \Phi_n(f)(j)_l$ the result follows.
\end{proof}

Each node $i \in \mathcal{V}$ maintains three state estimates at time $k$, denoted by $x^{i}(k) \in \mathbb{R}^d$ (referred as numerator state of node $i$), $y_{i}(k) \in \mathbb{R}$ (referred as denominator state of node $i$) and $r^{i}(k) \in \mathbb{R}^d$ (referred as ratio state of node $i$). Here $d\ (\ge 1)$ is the dimension of each node's state. Node $i$ updates its numerator and denominator states at the $(k+1)^{th}$ discrete iteration according to the following update law: 
\begin{align}
x^{j}(k+1)&=\sum_{i\in N^-_j}p_{ji}x^{i}(k),\label{eq:numeratorState}\\ 
y_{j}(k+1)&=\sum_{i\in N^-_j}p_{ji}y_{i}(k),\label{eq:denominatorState}
\end{align}
\noindent where, $N^-_i$ is the set of in-neighbors of node $i$.  We will use the notation $x(k+1) = {\bf P} x(k)$ as shorthand for \eqref{eq:numeratorState}, and observe that $x(n) = {\bf P}^n x(0)$. The initial conditions for the numerator vector state and denominator state for any node $i \in \mathcal{V}$ are:
\begin{align}\label{eq:initial_condition}
    x^i(0) &= [x_1^i(0) \ x_2^i(0) \dots x_d^i(0)]^T, \ y_i(0) = 1.
\end{align} 
Node $i$ further updates its ratio state as: 
\begin{align}
r^i(k+1)&= \frac{1}{y_i(k+1)}x^i(k+1),\label{eq:ratioState}
\end{align}
Under Assumptions~\ref{ass:StrConn},~\ref{ass:colSto} and the initialization in~(\ref{eq:initial_condition}), ratio state in \eqref{eq:ratioState} is well defined.
The next theorem establishes the convergence of the ratio state, which is a direct and simple generalization of the result in \cite{kempe2003gossip,dominguez2010coordination}.

\begin{theorem}\label{thm:consensus_conv}
Let $\{x^i(k)\}, \{y_i(k)\}$ and $\{r^i(k)\}$ be the sequences generated by~(\ref{eq:numeratorState}), (\ref{eq:denominatorState}) and (\ref{eq:ratioState}) respectively. Let the initial conditions for the network states be as defined in~(\ref{eq:initial_condition}). Then, under Assumptions~\ref{ass:StrConn} and~\ref{ass:colSto} the ratio state $r^i(k)$ asymptotically converges to $\overline{r}:= \lim \limits_{k \rightarrow \infty} \frac{1}{y_i(k)}x^i(k) = \frac{1}{N}\sum\limits_{j=1}^N x^j(0)$ for all $i \in \{1,...,N\}$. 
\end{theorem}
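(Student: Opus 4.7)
The plan is to recognize this as the classical push-sum convergence result and reduce it to a single application of the Perron-Frobenius theorem to the primitive column stochastic matrix ${\bf P}$. Because the coordinates of $x^i$ evolve independently under the update ${\bf P}$, it suffices to prove the claim coordinate-by-coordinate, so without loss of generality I can treat $x^i(k) \in \mathbb{R}$ and rely on the convergence of ${\bf P}^k$ to a rank-one matrix.

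First I would identify the left and right Perron eigenvectors of ${\bf P}$. Since ${\bf P}$ is column stochastic, $\mathbf{1}^T {\bf P} = \mathbf{1}^T$, so $\mathbf{1} = (1,\dots,1)^T$ is a left eigenvector for eigenvalue $1$. By Assumption~\ref{ass:StrConn} the underlying digraph is strongly connected, which together with Assumption~\ref{ass:colSto} (primitive) allows me to apply Perron-Frobenius: there is a unique (up to scaling) positive right eigenvector $\pi$ with ${\bf P}\pi = \pi$, and I normalize it so that $\mathbf{1}^T \pi = 1$, i.e.\ $\sum_i \pi_i = 1$ with $\pi_i > 0$. Primitivity then yields
\begin{align*}
    {\bf P}^k \;\longrightarrow\; \pi \mathbf{1}^T \qquad \text{as } k \to \infty,
\end{align*}
with geometric rate determined by the subdominant eigenvalue.

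Next I would apply this convergence to both states. Iterating \eqref{eq:numeratorState} gives $x(k) = {\bf P}^k x(0)$ so, componentwise,
\begin{align*}
    x^j(k) \;\longrightarrow\; \pi_j \sum_{i=1}^N x^i(0).
\end{align*}
Iterating \eqref{eq:denominatorState} with the initialization $y_i(0) = 1$ from \eqref{eq:initial_condition} gives $y(k) = {\bf P}^k \mathbf{1}$, hence
\begin{align*}
    y_j(k) \;\longrightarrow\; \pi_j \cdot N.
\end{align*}
Since $\pi_j > 0$ for every $j$ and $y_j(k) > 0$ for all finite $k$ (because $y_j(0)=1$ and $y_j(k+1)$ is a nonnegative combination with in-neighbor node $j$ itself contributing $p_{jj}$, which is positive by Assumption~\ref{ass:colSto} applied to the self-loop in Definition~\ref{def: in-neighborhood}; more generally, primitivity ensures $y_j(k)$ is bounded below by a positive constant for large $k$), the ratio in \eqref{eq:ratioState} is well defined and
\begin{align*}
    r^j(k) = \frac{x^j(k)}{y_j(k)} \;\longrightarrow\; \frac{\pi_j \sum_i x^i(0)}{\pi_j \cdot N} = \frac{1}{N}\sum_{i=1}^N x^i(0).
\end{align*}

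The main obstacle is not mathematical but bookkeeping: I need to be careful that $y_j(k)$ never vanishes so that the ratio is defined at every finite step, not merely in the limit. The cleanest way I foresee handling this is to observe that $y(k) = {\bf P}^k \mathbf{1}$ and, since ${\bf P}$ is primitive, some power ${\bf P}^{k_0}$ is entrywise positive; for $k \geq k_0$ every entry of $y(k)$ is bounded away from zero by a uniform constant, while for $k < k_0$ I can inductively use the self-loop weight $p_{jj} > 0$ (or equivalently $(i,i) \in N_i^-$ with $p_{ii}>0$) to keep each $y_j(k)$ strictly positive. Everything else is a direct limit computation from Perron-Frobenius.
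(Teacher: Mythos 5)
Your argument is correct. The paper itself disposes of this theorem in one line: it observes that the coordinates of $x^i$ evolve independently and then cites the scalar ratio-consensus convergence results of Kempe et al.\ and Dom\'inguez-Garc\'ia et al., applied componentwise. You perform the same componentwise reduction but then actually prove the scalar statement from scratch via Perron--Frobenius: since ${\bf P}$ is column stochastic its spectral radius is $1$ with left eigenvector $\mathbf{1}^T$, primitivity (Assumption~\ref{ass:colSto}) gives a positive right Perron vector $\pi$ normalized by $\mathbf{1}^T\pi=1$ and the rank-one limit ${\bf P}^k \to \pi\mathbf{1}^T$, whence $x^j(k)\to\pi_j\sum_i x^i(0)$, $y_j(k)\to N\pi_j>0$, and the ratio converges to the average. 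Your handling of the well-definedness of $r^j(k)$ at finite $k$ --- using the self-loop $p_{jj}>0$ guaranteed by Definition~\ref{def: in-neighborhood} and Assumption~\ref{ass:colSto} to keep $y_j(k)\geq p_{jj}^k>0$, and primitivity to bound it away from zero for large $k$ --- is a detail the paper asserts without proof, so your version is more self-contained. The one thing the citation-based proof buys that yours does not is generality: the referenced results cover time-varying and randomized (gossip) weight sequences, whereas your spectral argument is tied to a fixed primitive ${\bf P}$; for the time-homogeneous setting the paper explicitly restricts to, however, your proof is complete and arguably preferable since it makes the mechanism of convergence explicit.
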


\begin{proof}
The proof follows by applying the result from \cite{kempe2003gossip,dominguez2010coordination} applied component wise to $x^i$ and $r^i,$ and thus we have convergence in the case when node states are vectors. \end{proof}

In a slightly different framework, we consider the average consensus problem, where each node $i \in \mathcal{V}$  maintains a single state $z^i(k) \in \mathbb{R}^d$, for each time $k$, and update its state according to the following update law:
\begin{align} \label{eq:Row stochastic updates}
    z^i(k+1) = \sum_{j\in N^-_i}a_{ij} z^j(k),
\end{align}
\noindent where ${\bf A}=[a_{ij}]$ is a primitive, and row-stochastic
, with $a_{ij} > 0$ if and only if $(i,j) \in \mathcal{E}$. In this case, $z^i(k)$ converges independent of $i$, to $\sum\limits_{j=1}^N \pi_j z^j(0)$ for some $\pi \in \mathcal{S}_n$ (see \cite{levin2017markov}), and $\pi_i = \frac 1 N$ in the case that ${\bf A}$ is assumed to be column stochastic as well (see \cite{olfati2007consensus}).
\end{subsection}

\section{Convex Hull based Finite-Time stopping Criterion}\label{sec:ConHull}

The following theorem shows that consensus algorithms that are convex decreasing converge to the same finite limit at all nodes, and that if one sets an threshold for convergence with an open set about the consensus, the threshold will be met in finite time.  Further, when the threshold set is assumed convex, it is proven that if all agents possess a value within the set, their updated values remain within this threshold.  In this sense, convex threshold sets provide a guarantee on future behavior of the network.

\begin{theorem} \label{thm: Convex decreasing consensus algorithms have good convergence properties}
    Suppose that $S_n = \Phi_n(f)$ represents the state of a convex decreasing consensus algorithm, then $S_\infty \coloneqq \lim_n \Phi_n(f)_i$ is finite and well defined independent of $i$.  Further, given a set with non-empty interior $\mathcal O$, containing $S_\infty$, there exists $n_0$ such that $n \geq n_0$ implies $S_n \subseteq \mathcal O$.  If $K$ is a convex set such that $S_M \subseteq K$, then $S_n \subseteq K$ for $n \geq M$.
\end{theorem}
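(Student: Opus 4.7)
The plan is to establish the three assertions of the theorem in sequence, exploiting the interaction of the convex decreasing hypothesis with the consensus property. The key reduction is that the nested family $\{\overline{co}(S_n)\}$ must shrink to a single point, from which all three claims follow cleanly.

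For the existence and node-independence of $S_\infty$: since $|\mathcal{V}| < \infty$, each $co(S_n)$ is the convex hull of finitely many vectors and hence compact in $\mathbb{R}^d$. The convex decreasing assumption supplies the nesting $\overline{co}(S_{n+1}) \subseteq \overline{co}(S_n)$. Because the extreme points of $co(\Phi_n(f))$ lie among the indexed values $\{\Phi_n(f)(i) : i \in \mathcal{V}\}$ (a standard fact, used in Lemma~\ref{lem: basic convex geo}), the $\|\cdot\|$-diameter of $\overline{co}(S_n)$ equals $\max_{i,j}\|\Phi_n(f)(i) - \Phi_n(f)(j)\|$, which tends to $0$ by the consensus property. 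Applying Cantor's intersection theorem to the nested compact sets of vanishing diameter yields $\bigcap_n \overline{co}(S_n) = \{p\}$ for a unique $p \in W$. Since both $\Phi_n(f)(i)$ and $p$ lie in $\overline{co}(S_n)$, we have $\|\Phi_n(f)(i) - p\| \le \text{diam}(\overline{co}(S_n)) \to 0$ uniformly in $i$, hence $\Phi_n(f)(i) \to p =: S_\infty$ for every $i \in \mathcal{V}$.

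For the finite-time entry into $\mathcal{O}$: reading the hypothesis as $S_\infty$ belonging to the non-empty interior of $\mathcal{O}$, choose $\varepsilon > 0$ with the ball $B_\varepsilon(S_\infty) \subseteq \mathcal{O}$. The uniform convergence established above produces $n_0$ with $\|\Phi_n(f)(i) - S_\infty\| < \varepsilon$ for every $n \ge n_0$ and every $i \in \mathcal{V}$, so $S_n \subseteq B_\varepsilon(S_\infty) \subseteq \mathcal{O}$. For the forward invariance under a convex superset $K \supseteq S_M$: by the defining minimality of the convex hull, $co(S_M) \subseteq K$, and iterating the convex decreasing inclusion $co(S_{n+1}) \subseteq co(S_n)$ for $n = M, M+1, \ldots$ gives $co(S_n) \subseteq co(S_M) \subseteq K$, so in particular $S_n \subseteq K$.

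The main obstacle is the first step: identifying a single common limit requires that $\overline{co}(S_n)$ collapse to a point, and this rests simultaneously on the compactness granted by $|\mathcal{V}| < \infty$, the equality of the hull diameter with the diameter of the indexed state, and the consensus shrinkage. Once this diameter control is in hand, the passage to uniform convergence is automatic and drives (b), while (c) is little more than a restatement of the convex decreasing property combined with convexity of $K$.
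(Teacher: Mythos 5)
Your proposal is correct and follows essentially the same route as the paper's proof: nested compact convex hulls, Cantor's intersection theorem, the identification of the hull diameter with $\max_{i,j}\|\Phi_n(f)(i)-\Phi_n(f)(j)\|$ (you justify this via extreme points, the paper via convexity of $(x,y)\mapsto\|x-y\|$ --- the same fact in different clothing), and then the $\varepsilon$-ball and convexity arguments for the last two claims. The only added value is your explicit note that ``$\mathcal O$ containing $S_\infty$'' must be read as $S_\infty$ lying in the interior of $\mathcal O$, which the paper glosses over by silently treating $\mathcal O$ as open.
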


\begin{proof}
    For a convex set $K$, if $S_M \subseteq K$, then $co(S_M) \subseteq K$, and since $co(S_n)$ are nested, the last statement follows immediately.  Note that by Cantor's intersection theorem (see for example \cite[Lemma 3.2.2]{gray2009probability}), since $co(S_n)$ are nested, compact (since the convex hull of finitely many points, $\{\Phi_n(f)_i\}_{i=1}^{|\mathcal{V}|}$, can be expressed as the continuous image of the simplex, a compact set) sets, $\cap_n co(S_n)$ is non-empty. Since the mapping $(x,y) \mapsto \| x - y\|$ is a convex map\footnote{Indeed, the inequality $\|(1-t)x_0 + tx_1  - (1-t)y_0 - t y_1 \| \leq (1-t) \|x_0 -y_0\| + t \|x_1 - y_1\|$ follows from an application of the triangle inequality and scalar homogeneity, for any $t \in (0,1)$ and $x_i, y_i \in W$.}, it follows that
    \begin{align*}
        diam(co(S_n)) = \max_{ij} \| \Phi_n(f)_i - \Phi_n(f)_j \|.
    \end{align*}
    Thus the diameter of $co(S_n)$ is the maximum of finitely many terms tending to zero, and hence $\lim_n diam(co(S_n)) = 0$, and the non-empty set $\cap_n co(S_n)$ can contain at most one point, which we denote $S_\infty$.\\
    Given $\mathcal{O}$, an open set containing $S_\infty$, $co(S_n) \subseteq \mathcal{O}$ for large enough $n$, since $diam(co(S_n))$ tends to zero and $\mathcal{O}$ contains an $\varepsilon$ ball about $S_\infty$ with respect to $\| \cdot \|$ for small enough $\varepsilon$, since all finite dimensional norms are equivalent.
\end{proof}

Theorem \ref{thm: Convex decreasing consensus algorithms have good convergence properties} allows us to provide stopping guarantees to convex decreasing consensus algorithms, particularly useful in distributed contexts.  We will show that 
in both consensus frameworks \eqref{eq:ratioState} and \eqref{eq:Row stochastic updates} the network states $\{ r^i(k)\}_{i=1}^N$ and $\{z^i(k)\}_{i=1}^N$ at time $k$ define a sequence of polytopes $\{r(k)\}_{k=0}^\infty$ and $\{z(k)\}_{k=0}^\infty$ respectively defined to be
\begin{align*}
    r(k) 
        &\coloneqq 
            co(\{r^i(k)\}_{i=1}^N)
        \\
    z(k) 
        &\coloneqq 
            co(\{z^i(k)\}_{i=1}^N )
\end{align*}
that are convex decreasing.

\begin{theorem} \label{thm: examples of convex decreasing}
Consider the update equation \eqref{eq:Row stochastic updates} for $z(k)$ where $z(k+1)={\bf A}z(k)$ with the assumption that ${\bf A}$ is row stochastic and the update equations for $r(k)$ as given in \cite{fax2002information,olfati2007consensus,nedic2014distributed,khatana2019gradient}, where $x(k+1)={\bf P}x(k); y(k+1)={\bf P} y(k)$ and $r(k+1)=\frac{1}{y(k)}x(k)$ where ${\bf P}$ is column stochastic. Then $z(k)$ and $r(k)$ form convex decreasing consensus algorithms. 
\end{theorem}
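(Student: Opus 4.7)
My plan is to verify the two conditions separately for each update: (i) each is a consensus algorithm, and (ii) the associated sequence of polytopes satisfies $co(S_{k+1}) \subseteq co(S_k)$. Consensus for the ratio update is the content of Theorem~\ref{thm:consensus_conv} (quoted verbatim from \cite{kempe2003gossip,dominguez2010coordination}), while the row-stochastic case follows from standard convergence of a primitive stochastic matrix to its stationary distribution as referenced in the paragraph following \eqref{eq:Row stochastic updates}. So the substantive work is just the convex-decreasing property, and by Proposition~\ref{prop: basics of support functions}\eqref{item: convex decreasing in terms of support functions} it is enough to exhibit each $z^i(k+1)$ and $r^i(k+1)$ as a convex combination of the values at time $k$, since this places every new node value inside $co(z(k))$, respectively $co(r(k))$, and hence places the convex hull of the new values inside the previous hull.

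For the row-stochastic case the argument is immediate: the identity $z^i(k+1) = \sum_{j \in N_i^-} a_{ij} z^j(k)$ together with $a_{ij}\ge 0$ and $\sum_j a_{ij}=1$ exhibits $z^i(k+1)$ as a convex combination of $\{z^j(k)\}_{j=1}^N$, so $\{z^i(k+1)\}_{i=1}^N \subseteq co(z(k))$, whence $co(z(k+1)) \subseteq co(z(k))$.

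For the ratio update the main subtlety is that ${\bf P}$ is column stochastic rather than row stochastic, so the weights $p_{ji}$ themselves are not the right convex combination. The trick is to use the factorization $x^j(k) = y_j(k)\, r^j(k)$ and rewrite
\begin{align*}
r^i(k+1) \;=\; \frac{x^i(k+1)}{y_i(k+1)} \;=\; \frac{\sum_{j \in N_i^-} p_{ij}\, y_j(k)\, r^j(k)}{\sum_{j \in N_i^-} p_{ij}\, y_j(k)} \;=\; \sum_{j \in N_i^-} \lambda_{ij}(k)\, r^j(k),
\end{align*}
with $\lambda_{ij}(k) \coloneqq p_{ij} y_j(k)/\sum_{\ell} p_{i\ell} y_\ell(k)$. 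The coefficients $\lambda_{ij}(k)$ are manifestly nonnegative and sum to one in $j$, provided the denominator is strictly positive.

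Thus the only real obstacle is to justify $y_i(k) > 0$ for every $i$ and every $k \ge 0$, and to know that the normalizing sum in the denominator is nonzero. This is where I would invoke Assumption~\ref{ass:colSto}: $y_i(0)=1>0$ by \eqref{eq:initial_condition}, the matrix ${\bf P}$ is entry-wise nonnegative with $p_{ii}>0$ (since $(i,i)\in N_i^-$), so $y_i(k+1) \ge p_{ii} y_i(k) > 0$ by induction; consequently the denominator $\sum_\ell p_{i\ell} y_\ell(k) \ge p_{ii} y_i(k) > 0$ as well. This legitimizes the convex-combination display above, places each $r^i(k+1)$ in $co(r(k))$, and yields $co(r(k+1)) \subseteq co(r(k))$, completing the proof.
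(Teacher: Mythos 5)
Your proposal is correct. For the row-stochastic update $z$ it coincides with the paper's argument (each $z^i(k+1)$ is exhibited as a convex combination of the $z^j(k)$). For the ratio update $r$ you take a genuinely different route: you normalize the column-stochastic weights by the denominator states, writing $r^i(k+1)=\sum_j \lambda_{ij}(k)r^j(k)$ with $\lambda_{ij}(k)=p_{ij}y_j(k)/\sum_\ell p_{i\ell}y_\ell(k)$, and then verify $\lambda_{ij}(k)\ge 0$, $\sum_j\lambda_{ij}(k)=1$ by proving $y_i(k)>0$ inductively from $y_i(0)=1$ and $p_{ii}>0$. The paper instead works in the dual: it uses Proposition~\ref{prop: basics of support functions}\eqref{item: convex decreasing in terms of support functions} and shows $h_{r(k+1)}\le h_{r(k)}$ by propagating the linear-functional bound $\langle x^j(k),u\rangle\le h_{r(k)}(u)\,y^j(k)$ through the column-stochastic update, which never requires dividing by $y$ at intermediate steps. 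Your argument is more elementary and yields the stronger structural fact that the ratio states evolve under a time-varying row-stochastic matrix, reducing the second case to the first; the price is that you must separately justify positivity of the denominators, which you do correctly (and which the paper also needs, implicitly, for $r$ to be well defined). One cosmetic remark: your appeal to Proposition~\ref{prop: basics of support functions}\eqref{item: convex decreasing in terms of support functions} is unnecessary for your route --- the convex-combination display already gives $co(S_{k+1})\subseteq co(S_k)$ directly from the definition of the convex hull --- whereas it is the essential tool in the paper's version. The concluding consensus claims are handled the same way in both proofs, by citation of Theorem~\ref{thm:consensus_conv} and standard Perron--Frobenius/Markov-chain convergence.
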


\begin{proof}
    To see that $z(k)$ is convex decreasing is immediate, since by definition 
    \begin{align*}
       z^i(k+1) =\sum_{j\in N^-_i}a_{ij} z^j(k),
    \end{align*}
    where $\{a_{ij} \}_j$ is a sequence of non-negative numbers that sum to one.  Thus $z^i(k+1)$ is a convex combination of elements of $z(k)$ and hence $z(k+1) = co(\{ z^i(k+1) \}) \subseteq z(k)$.\\
    To see that $r(k)$ are convex decreasing, since $r(k)$ is finite and hence compact for all $k$, by Proposition \ref{prop: basics of support functions}, it is enough to show that their support functions are decreasing, that is $h_{r(k+1)} \leq h_{r(k)}$. Note that from $r^j(k) = x^j(k)/y^j(k)$ the support function satisfies the following inequality for all $j$,
    \begin{align*}
        \langle x^j(k), u \rangle \leq h_{r(k)}(u) y^j(k).
    \end{align*}
    With ratio-consensus updates from ${\bf P}$ column stochastic, to prove convex decreasingness, it suffices to show that $\langle r^j(k+1), u \rangle \leq h_{r(k)}(u)$, or equivalently, 
    \begin{align*}\langle x^j(k+1), u \rangle \leq h_{r(k)}(u) y^j(k+1).
    \end{align*}
    Computing,
    \begin{align*}
        \langle x^j(k+1), u \rangle 
            &=
                \sum_i p_{ji} \langle x^i(k), u \rangle 
                    \\
            &\leq
                \sum_i p_{ji} h_{r(k)} y^i(k)
                    \\
            &=
                 h_{r(k)}(u) y^j(k+1).
    \end{align*}
    That $r(k)$ and $z(k)$ are consensus algorithms follows from well known literature. In particular $r(k)$ is a consensus algorithm by Theorem \ref{thm:consensus_conv}; indeed, $r(k)$ converges to the average $\sum_{i=1}^n x^{i}(0) \in \mathbb{R}^d.$.  For $z$, observe that $z(n) = {\bf P}^n z(0)$.  The connectivity properties of ${ \bf P}$, ensure that
    \begin{align*}
        \lim_{n \to \infty} {\bf P}^n = \left( \begin{array}{cccc}
            \pi_1 & \pi_2 & \dots & \pi_{|\mathcal{V}|} \\
            \pi_1 & \pi_2 & \dots & \pi_{|\mathcal{V}|} \\
            \vdots & \vdots & \vdots & \vdots \\
            \pi_1 & \pi_2 & \dots & \pi_{|\mathcal{V}|} \\
            \end{array}
            \right)
    \end{align*}
    for a $\pi \in \mathcal{S}_{|\mathcal{V}|}$ (see \cite{levin2017markov} for example, and note in the language of Markov Chains that $p_{ii} >0$ ensures aperiodicity, while irreducibility follows from strong connectedness).  As a consequence $\lim_n z(n) = \lim_n {\bf P}^n z(0) = \left( \begin{array}{c} 
            \sum_i \pi_i z_i(0) \\
            \vdots \\
            \sum_i \pi_i z_i(0).
            \end{array} \right)$.  
\end{proof}

Below in Figure \ref{fig:convex hull}, we briefly illustrate the result in Thoerem~\ref{thm: examples of convex decreasing} via an example, for the update $z(k+1)={\bf A} z(k)$ with ${\bf A}$ being row stochastic a 30 node Erdos-R\'enyi graph is initialized with values chosen uniformly at random from $(0,1)^2$.  The initial values are displayed in red, with the boundary of the convex hull traced in black.  
The points of $z(0)$ are displayed in red, $z(1)$ in blue, $z(2)$ in green, with the convex hull boundaries of the respective sets traced in black.  That row stochastic updating is convex decreasing is instantiated in the nested-ness of the sets $z(i)$.

\begin{wrapfigure}{r}{0.5\textwidth}
    \begin{center}
      \includegraphics[width=0.48\textwidth]{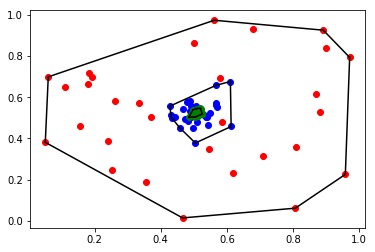}
      \caption{Iterations of a convex decreasing algorithm}
      \label{fig:convex hull}
    \end{center}
  \end{wrapfigure}

When $d =1$, the convex hull is simply described, $co(r(k)) = [\min_j \{r^j(k) \}_{j=1}^n, \max_j \{ r^j(k) \}_{j=1}^n]$, so that $co(r(k)) \supseteq co(r(k+1))$ gives the monotonicity results from \cite{prakash2019distributed}, $\min_j \{r^j(k) \}_{j=1}^n, \leq \min_j \{ r^j(k+1) \}_{j=1}^n$ and $\max_j \{r^j(k) \}_{j=1}^n, \geq \max_j \{ r^j(k+1) \}_{j=1}^n$.  Analogously, applying Theorem \ref{thm: examples of convex decreasing} in the one dimensional case delivers the monotonicity of min and max from \cite{yadav2007distributed}. In the $d$-dimensional case, taking $\mathcal{F}$ to be the family of all rectangular sets recovers Theorem $4$ of \cite{khatana2019gradient}.

We will use the monotonicity of convex decreasing consensus algorithms to develop a distributed stopping criteria, guaranteeing convergence of all nodes within an $\epsilon$-ball of the consensus value for a general norm.  
First we develop a distributed convex hull computation algorithm that is of independent interest.  Here, given a function $w: {\mathcal{V}} \to \mathbb{R}^d$ the convex hull $co(w) \coloneqq co(\{w(i)\}_{i=1}^n)$ is to be determined by the network in a distributed fashion.  Consider a algorithm where at stage one, agents share their value $w(j)$ with neighbors.  The nodes then update their approximation of the convex hull, by determining the extreme points among all values received from their neighbors and their own.  This new set of extreme points is communicated to all neighbors and then the process repeats. We show that after $D$ ($D$ being the diameter of the network) iterations, every node will have determined the extreme points of $w$.

In the context of a consensus algorithm protocol $w: {\mathcal{V}} \to \mathbb{R}^d$ represents $w(i) = r^i(k)$ or $z^i(k)$ the value at node $i$ in iteration $k$ of a consensus algorithm, and we implement the following stopping criterion.  Given a norm $\| \cdot \|$ and a tolerance $\varepsilon$, implement the convex hull algorithm at time $k$, then at time $k+D$, at a node $j$ if $\max_{e(i),e(j) \in \mathscr{E}(w)} \| e(i) - e(j) \| \leq \varepsilon$, then stop the consensus algorithm.  In what follows we demonstrate that upon stopping every node is within $\varepsilon$ of the consensus value in norm.

\section{Peer to Peer Convex Hull Algorithm}

We now describe a finite time algorithm for distributed convex hull computation.  Suppose that $|\mathcal{V}|$ agents indexed by $i$  where each agent has a set $S_i$ of elements of $\mathbb{R}^d$.  The agent can communicate with each other while respecting constraints imposed by a specific communication network. We provide a distributed consensus algorithm through which 
all agents obtain $\mathscr{E} \left( \cup_i S_i \right)$ in $D$-iterations of the algorithm.  If we let $E$ denote the space of all finite sequences of elements of $\mathbb{R}^d$, the convex hull algorithm can be understood as a distributed consensus algorithm.

\begin{definition} \label{def: convex hull algorithm}
    For $S: \mathcal{V} \to E$, define the initialization $x_i(0) = \mathscr{E}(S_i)$.  Iteratively define,
    \begin{align*}
     x_i(t) &= \mathscr{E}(\bigcup_{j \in N_{i}^-(1)} x_j(t-1)).
    \end{align*}
   We identify $x_i(t)$ with an element of $E$, by writing its elements in lexicographical order.
\end{definition}

That is, at iteration $t$, an agent $i$ receives the extreme points known to their ``in-neighbors'' and forms a new set $s_i(t)$ comprised of their previous extreme points and their neighbors.  Agent $i$ finds the extreme points of this new set, and then communicates the set to its ``out-neighbors'' to initiate another iteration. 

\begin{theorem}
For any initial configuration defined by $S: {\mathcal{V}} \mapsto E$, the algorithm for $x_i(t)$ described in Definition \ref{def: convex hull algorithm} is a distributed consensus algorithm.  Moreover, considered as sets
\begin{align} \label{eq: consensus of hull algorithm}
    x_i(t) = \mathscr{E}(S_i(t))
\end{align}
 where $S_i(t) \coloneqq  \cup_{j \in N_i^-(t)}S_j$, and we recall that $N_i^-(t)$ is the $t$ in-neighborhood of node $i$ (see Definition~\ref{def: in-neighborhood}).  In particular, for $t \geq D$, $x_i(t) = x_i(D) = \mathscr{E}( \cup_{i \in \mathcal{V}} S_i)$.
\end{theorem}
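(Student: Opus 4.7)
My plan is to establish the set identity $x_i(t) = \mathscr{E}(S_i(t))$ by induction on $t$ and then read off both the finite-time termination at $t=D$ and the consensus property as easy corollaries. For the base case $t=0$, the convention $N_i^-(0) = \{i\}$ gives $S_i(0) = S_i$, and by definition $x_i(0) = \mathscr{E}(S_i)$, so both sides agree. For the inductive step, assuming $x_j(t-1) = \mathscr{E}(S_j(t-1))$ for every $j$, I would chain the three equalities
\begin{align*}
x_i(t) = \mathscr{E}\Bigl(\bigcup_{j \in N_i^-(1)} \mathscr{E}(S_j(t-1))\Bigr) = \mathscr{E}\Bigl(\bigcup_{j \in N_i^-(1)} S_j(t-1)\Bigr) = \mathscr{E}(S_i(t)).
\end{align*}

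The middle equality is the technical core: replacing a finite family of finite sets by their extreme points does not change the extreme points of the union. I plan to deduce this from the second half of Lemma \ref{lem: basic convex geo}, namely $co(\bigcup_\alpha A_\alpha) = co(\bigcup_\alpha co(A_\alpha))$, combined with $co(\mathscr{E}(K)) = K$ for $K$ convex and compact, noting that $\mathscr{E}$ depends only on the closed convex hull. The third equality requires the decomposition $N_i^-(t) = \bigcup_{j \in N_i^-(1)} N_j^-(t-1)$, which I would justify by observing that a path of length at most $t$ terminating at $i$ factors as a final edge into $i$ followed by a path of length at most $t-1$; taking unions of the corresponding $S_j$ then yields $S_i(t) = \bigcup_{j \in N_i^-(1)} S_j(t-1)$.

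Given this identity, the finite-time clause is immediate: strong connectedness of $\mathcal{G}$ with diameter at most $D$ forces $N_i^-(t) = \mathcal{V}$ for every $t \geq D$, so $S_i(t) = \bigcup_{k \in \mathcal{V}} S_k$ and hence $x_i(t)$ is constant in both $t$ and $i$ on this range, equal to $\mathscr{E}(\bigcup_k S_k)$. That the algorithm is distributed is transparent from the update rule, which queries only the one-step in-neighborhood of $i$. Consensus then holds trivially, since $x_i(t)$ stabilizes identically across $i$ from $t=D$ onward; under any reasonable metric on $E$ (for instance, Hausdorff distance between the underlying sets after fixing the lexicographic ordering) the pairwise distances are exactly zero for $t \geq D$.

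I anticipate the main obstacle to be the careful handling of the extreme-point identity $\mathscr{E}(\bigcup_j \mathscr{E}(A_j)) = \mathscr{E}(\bigcup_j A_j)$. While intuitively clear, it requires invoking the compactness and convex-hull machinery of Lemma \ref{lem: basic convex geo} rather than simply unwinding definitions, and one must verify that finiteness of the $S_j$, which makes the relevant convex hulls compact polytopes, legitimately places us in the compact-convex setting of that lemma. The path-decomposition identity for $N_i^-(t)$ also deserves an explicit line of justification, since the paper's recursion grows the neighborhood by prepending edges on the tail, whereas the argument above needs to peel off an edge on the head end.
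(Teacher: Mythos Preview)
Your proposal is correct and follows essentially the same route as the paper's own proof: induction on $t$, the extreme-point identity $\mathscr{E}\bigl(\bigcup_j \mathscr{E}(A_j)\bigr) = \mathscr{E}\bigl(\bigcup_j A_j\bigr)$ justified via Lemma~\ref{lem: basic convex geo}, and the path-decomposition $\bigcup_{j \in N_i^-(1)} N_j^-(t-1) = N_i^-(t)$ argued by reading both sides as nodes reachable to $i$ along paths of length at most $t$. The concern you flag about head-versus-tail growth of $N_i^-(\cdot)$ is exactly the point the paper addresses in the same way, by passing through the path characterization rather than unrolling the recursive definition directly.
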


\begin{proof}
The result is true by definition checking when $t=0$, since $S_i(0) = S_i$. Thus we proceed by induction and assume the result holds for $k< t$.
By definition,   
\begin{align}
        x_i(t)
            =
                \mathscr{E}\left( \bigcup_{j \in N_i^-(1)} x_j(t-1) \right) \label{eq: cvx hull eq 1}.
\end{align}
By the induction hypothesis,
\begin{align*}
        \mathscr{E}\left( \bigcup_{j \in N_i^-(1)} x_j(t-1) \right)
            &=
                \mathscr{E}\left( \bigcup_{j \in N_i^-(1)} \mathscr{E}(S_j(t-1)) \right)
                    \\
            &=
                \mathscr{E}\left( \bigcup_{j \in N_i^-(1)} \mathscr{E}\left( \bigcup_{k \in N_j^-(t-1)} S_k \right) \right).
    \end{align*}
    Recall that for non-convex sets $U$, $\mathscr{E}(U) \coloneqq \mathscr{E}(co(U))$ , so that 
    \begin{align*}
        \mathscr{E}\left( \cup_{j } \mathscr{E} \left( \cup_{k_j } S_{k_j} \right) \right)
            =
                \mathscr{E}\left( co ( \cup_{j } \mathscr{E} \left( co( \cup_{k_j } S_{k_j} ) \right) ) \right)
    \end{align*} 
    where the subscript $k_j$ ranges over the set $N_j^-(t-1)$.
    If we write $K_j = co( \cup_{k_j}S_{k_j} )$, and apply Lemma \ref{lem: basic convex geo}, with the fact that $K$ is convex and compact,
    \begin{align*}
        co ( \cup_{j } \mathscr{E} \left(  K_j \right) ) 
            &=
                 co ( \cup_{j } co( \mathscr{E} (K_j)))
                    \\
            &=
                co ( \cup_{j } K_j ).
    \end{align*}
    By definition of $K_j$ and another application of Lemma \ref{lem: basic convex geo},
    \begin{align*}
        co ( \cup_{j } K_j )
            &=
                 co ( \cup_{j } co( \cup_{k_j } S_{k_j} )  ) 
                    \\
            &=
                co ( \cup_{j }  \cup_{k_j } S_{k_j} ) .
    \end{align*}
   Thus our result follows once we can show,
   \begin{align*}
        \bigcup_{j \in N_i^-(1)} \left( \bigcup_{k \in N_j^-(t-1)} S_k \right)  = \bigcup_{j \in N_i^-(t)} S_j.
   \end{align*}
    Both sets can be considered as unions of $S_k$ indexed by paths of length not larger than $t$ terminating at $i$.  More explicitly, both sets can be written as
    $
        \bigcup_{\lambda \in \Lambda} S_\lambda
   $
    where $\Lambda$ is the space of all paths $v:\{0,1,2, \dots, k \} \to V$ such that $k \leq t$, $v(k) = v_i$.   This gives \eqref{eq: consensus of hull algorithm}.  Since $N_i^-(t) = V$ for $t \geq D$, $S_i(t) = \cup_{j \in \mathcal{V}}S_j$ and $x_i(D) = x_j(D) = \mathscr{E}(S)$.  Thus the algorithm considers is a consensus algorithm. The algorithm is distributed as each $S_i(t)$ is a function of the $S_j(t-1)$ for $j \in N_i^-(1)$.
\end{proof}

This shows that, agents in a distributed network can obtain exact knowledge of the convex hull in $D$ iterations.  As an application the convex hull algorithm can be used to provide finite time stopping criterion for a convex decreasing consensus algorithm. We need the following lemma.
\begin{lem}
For a norm $\|\cdot\|$ and a convex set $K$,
\begin{align*}
    diam_{\|\cdot \|}(K) = \sup_{w_1, w_2 \in \mathscr{E}(K)} \| w_1 - w_2\|.
\end{align*}
\end{lem}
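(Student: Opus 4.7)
The plan is to prove the two inequalities separately. The easy direction, $\sup_{w_1,w_2 \in \mathscr{E}(K)} \|w_1 - w_2\| \leq \mathrm{diam}_{\|\cdot\|}(K)$, is immediate from the inclusion $\mathscr{E}(K) \subseteq K$ (valid since $K$ is itself convex, so $\mathscr{E}(K)$ coincides with the extreme points of $co(K) = K$ as in Definition~\ref{def: Extreme point}) together with the definition of diameter as a supremum.

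For the reverse inequality, I would invoke Lemma~\ref{lem: basic convex geo}, which provides the representation $K = co(\mathscr{E}(K))$ when $K$ is compact and convex. (In the paper's intended applications $K$ is the convex hull of finitely many network states, so compactness is automatic; if $K$ is unbounded both sides are infinite and the identity still holds trivially.) Thus any $x, y \in K$ admit finite convex representations
\begin{align*}
    x = \sum_{i=1}^m \alpha_i e_i, \qquad y = \sum_{j=1}^n \beta_j f_j,
\end{align*}
with $e_i, f_j \in \mathscr{E}(K)$ and $(\alpha_i), (\beta_j)$ convex coefficients (with Carath\'eodory's theorem guaranteeing finiteness if desired).

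The key algebraic step is the tensor-product identity, which exploits $\sum_i \alpha_i = \sum_j \beta_j = 1$ to write
\begin{align*}
    x - y = \Bigl(\sum_j \beta_j\Bigr)\sum_i \alpha_i e_i - \Bigl(\sum_i \alpha_i\Bigr)\sum_j \beta_j f_j = \sum_{i,j} \alpha_i \beta_j (e_i - f_j).
\end{align*}
Applying the triangle inequality and scalar homogeneity of the norm, together with $\sum_{i,j} \alpha_i \beta_j = 1$, yields
\begin{align*}
    \|x - y\| \leq \sum_{i,j} \alpha_i \beta_j \|e_i - f_j\| \leq \sup_{w_1,w_2 \in \mathscr{E}(K)} \|w_1 - w_2\|.
\end{align*}
Taking the supremum over $x, y \in K$ on the left-hand side finishes the argument.

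The only delicate point is the legitimacy of the representation $K = co(\mathscr{E}(K))$, which depends on compactness; but as noted this is routine in the cases of interest, and the tensor trick above is what makes the supremum reduce to extreme points. I do not expect any real obstacle here — the proof is essentially a one-line computation once Lemma~\ref{lem: basic convex geo} is in hand.
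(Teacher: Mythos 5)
Your proof is correct, but it takes a genuinely different route from the paper's. The paper argues via a maximum principle for convex functions: for fixed $y$, the map $x \mapsto \|x-y\|$ is convex, hence its supremum over $K$ equals its supremum over $\mathscr{E}(K)$; applying this once in each variable reduces the diameter to a supremum over pairs of extreme points. You instead invoke the representation $K = co(\mathscr{E}(K))$ from Lemma~\ref{lem: basic convex geo} directly, expand $x$ and $y$ as convex combinations of extreme points, and use the bilinear identity $x - y = \sum_{i,j}\alpha_i\beta_j(e_i - f_j)$ followed by the triangle inequality. The two arguments rest on the same underlying fact (both need $K$ compact so that $K = co(\mathscr{E}(K))$ holds and $\mathscr{E}(K)$ is nonempty --- a hypothesis the lemma statement omits but which holds in all the paper's applications, as you correctly note), and your tensor-product expansion is essentially an unrolled, two-variable version of the Jensen-type estimate that justifies the paper's maximum principle. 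Your version is more self-contained and explicit about the compactness caveat; the paper's is shorter and reuses the convexity-of-the-norm observation that also appears in its Theorem~\ref{thm: Convex decreasing consensus algorithms have good convergence properties}. You also handle the easy inclusion $\mathscr{E}(K)\subseteq K$ explicitly, which the paper leaves implicit. No gaps.
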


\begin{proof}
For fixed $y \in K$ that, $x \mapsto \| x-y\|$ is convex and hence takes its maximum value on $K$ at extreme value of $K$.  Hence $\sup_{x \in K} \| x - y\| = \sup_{w_1 \in \mathscr{E}(K)} \| w_1 - y \|$, applying the same argument again we obtain 
\begin{align*}
    \sup_{w_1 \in \mathscr{E}(K)} \| w_1 - y \| 
        &=
            \sup_{w_1 \in \mathscr{E}(K)} \sup_{w_2 \in \mathscr{E}(K)} \| w_1 - w_2 \|,
\end{align*}
and our result follows.
\end{proof}

\begin{theorem}
    If $c^i(k)$ denotes the vector at node $i$ at time $k$ in a convex-decreasing consensus algorithm, then for $k' \geq k$
    \begin{align*}
    \|c^i(k') - \lim_{n \to \infty} c^i(n)\| \leq \max_{w_1,w_2 \in \mathscr{E}(c^k)} \|w_1 - w_2\|.
    \end{align*}
\end{theorem}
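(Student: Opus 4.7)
The plan is to combine the convex decreasing property established earlier with the diameter lemma just proved. Write $S_k \coloneqq \{c^i(k)\}_{i \in \mathcal{V}}$ so that, in the notation of Theorem~\ref{thm: Convex decreasing consensus algorithms have good convergence properties}, $S_k = \Phi_k(f)$ and $co(S_{k+1}) \subseteq co(S_k)$ for all $k$. The theorem also guarantees that $S_\infty \coloneqq \lim_n c^i(n)$ is well defined independent of $i$.

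First, I would argue that for any $k' \geq k$ we have $c^i(k') \in co(S_k)$. This is immediate from the convex decreasing property: $c^i(k') \in S_{k'} \subseteq co(S_{k'}) \subseteq co(S_k)$. Next, I would show that $S_\infty \in co(S_k)$. Since $co(S_k)$ is compact and the $\{co(S_n)\}$ are nested, $S_\infty \in \bigcap_{n \geq k} co(S_n) \subseteq co(S_k)$; alternatively one sees this directly by writing $S_\infty = \lim_n c^i(n)$ with each $c^i(n) \in co(S_k)$ for $n \geq k$, and invoking closedness of $co(S_k)$.

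With both points in $co(S_k)$, the norm bound follows from the definition of diameter:
\begin{align*}
\|c^i(k') - S_\infty\| \leq \operatorname{diam}_{\|\cdot\|}(co(S_k)).
\end{align*}
Applying the preceding lemma with $K = co(S_k)$ gives
\begin{align*}
\operatorname{diam}_{\|\cdot\|}(co(S_k)) = \sup_{w_1,w_2 \in \mathscr{E}(co(S_k))} \|w_1 - w_2\|,
\end{align*}
and by Definition~\ref{def: Extreme point} we have $\mathscr{E}(S_k) = \mathscr{E}(co(S_k))$, yielding the stated inequality.

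There is no real obstacle here; the content of the statement is carried entirely by (i) the nestedness from convex decreasing, (ii) the existence of a common limit from Theorem~\ref{thm: Convex decreasing consensus algorithms have good convergence properties}, and (iii) the diameter lemma. The only point that requires any care is being explicit that $\mathscr{E}(S_k)$ means $\mathscr{E}(co(S_k))$ and that $S_\infty$ indeed lies in $co(S_k)$ (rather than merely in its closure), which is free here because $co(S_k)$ is the convex hull of finitely many points and hence already closed.
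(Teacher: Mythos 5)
Your proposal is correct and follows essentially the same route as the paper: nestedness of the hulls places both $c^i(k')$ and the common limit inside $co(S_k)$, and the diameter lemma converts $\operatorname{diam}_{\|\cdot\|}(co(S_k))$ into the maximum over pairs of extreme points. Your extra remark that $co(S_k)$ is closed (being the convex hull of finitely many points) is a small but welcome justification of a step the paper leaves implicit.
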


\begin{proof}
    Denoting by $c(k)$ the element of $(\mathbb{R}^d)^{\mathcal{V}}$ defined by $i \mapsto c_i(k)$, the assumption that $c(k)$ is convex decreasing implies $co(c(k')) \subseteq co(c(k))$ for $k' \geq k$, and hence $c_i(k') \in co(c(k))$ for all $i$.  Thus, $\lim_n c^i(n) \in co(c(k))$ as well and we have
        \begin{align*}
            \| c^i(k') - \lim_n c^i(n) \|
                &\leq
                    diam_{\| \cdot \|} (c(k))
                        \\
                &=
                    \max_{w_1,w_2 \in \mathscr{E}(c_k)} \|w_1 - w_2\|. 
        \end{align*}
\end{proof}

It follows that an agent can obtain exact bounds on the distance from convergence of the consensus with respect to an arbitrary norm.

Standard algorithms for computing the convex hull of a set of points in $d$-dimensional exist, see \cite{clarkson1989applications,barber1996quickhull}.   However such can easily be prohibitively expensive especially in high dimension (worst case runtime is of the order $O(n^{d/2})$), when computational resources, or communication power is limited.  Further, in the worst case scenario, the number of extreme points can be of the same order as the nodes of the graph (take $w(i)$ to be points of a $d$-dimensional sphere for instance), and hence their communication cost is equivalent to that of the entire system state.  In the following section we develop a stopping algorithm to address these potential feasibility issues.

\section{Norm Based Finite-Time Termination}\label{sec:normBase}

Similar to the convex hull comprising all points (corresponding to each agent), radius of a minimal ball in $d$ dimension enclosing all the points can also be used as a termination criterion. Once the radius is within some bound $\rho$, it can be shown that every agent's state is within $2\rho$ of the consensus value. We remark that even in the $p-norm$ case determination of a minimum norm ball in a distributed manner is a difficult problem (see \cite{fischer1975smallest}). Here, we provide an algorithm which distributedly finds an approximation of minimal ball at each agent. We next show that the minimal ball is enclosed in this approximation, thus if the approximate ball's radius is within $\rho$ then the minimal ball's radius is within $\rho$ as well. This is established in next Lemma.

\begin{lem}\label{lem:RadiusUpdate} Let $\{c^i(k)\}$ be the sequence generated by a distributed convex-decreasing consensus protocol. 
For all $i \in \V$, let
\begin{align}\label{eq:radiusUpdate}
    R_i(k+1,k') \coloneqq \underset{j\in  N^-_i}{\max} &\{\|c^i(k'+k+1)-c^j(k'+k) \|\\
    &+ R_j(k,k')\}
\end{align}
with $R_i(0,k'):=0$ and $k'\geq0$. Then
\begin{equation}\label{eq:radiusBall}
    c^j(k')\in B\{R_i(D,k'),c^i(k'+D)\},
\end{equation}
for all $j \in \mathcal{V}$,
where $B\{R,x\}$ denotes the closed ball of radius $R$ centered at $x$ and $D$ is the diameter of the underlying graph topology.
\end{lem}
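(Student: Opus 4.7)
The proposal is to show, by induction on $k$, the stronger localized claim
\begin{align*}
  \|c^i(k'+k) - c^j(k')\| \leq R_i(k,k') \quad \text{for every } j \in N^-_i(k),
\end{align*}
from which \eqref{eq:radiusBall} follows immediately by setting $k=D$ and recalling that $N^-_i(D) = \mathcal{V}$ because $D$ upper-bounds the diameter of $\mathcal{G}$. Note that the convex-decreasing hypothesis plays no role in the argument itself; the statement is a purely metric fact that follows from the triangle inequality and the recursive definition of $R_i$.

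The base case $k=0$ is immediate: $N^-_i(0) = \{i\}$, so the only $j$ to consider is $j = i$, and $\|c^i(k') - c^i(k')\| = 0 = R_i(0,k')$. For the inductive step, suppose the claim holds at stage $k$ for every node. Fix $j \in N^-_i(k+1)$. By the definition of the $(k+1)$-in-neighborhood, $N^-_i(k+1) = \bigcup_{l \in N^-_i(1)} N^-_l(k)$, so one can choose $l \in N^-_i(1)$ with $j \in N^-_l(k)$. The triangle inequality gives
\begin{align*}
  \|c^i(k'+k+1) - c^j(k')\|
    &\leq \|c^i(k'+k+1) - c^l(k'+k)\| \\
    &\quad + \|c^l(k'+k) - c^j(k')\|.
\end{align*}
Applying the inductive hypothesis to $l$ bounds the second term by $R_l(k,k')$, and then taking the maximum over $l \in N^-_i(1)$ (an upper bound regardless of the particular $l$ chosen for $j$) yields exactly the recursion \eqref{eq:radiusUpdate}, so the right-hand side is at most $R_i(k+1,k')$. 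This completes the induction.

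The only subtlety — and the one step worth doing carefully — is ensuring that the choice of intermediate neighbor $l$ in the triangle inequality is compatible with both the local recursion for $R_i(k+1,k')$ and the decomposition $N^-_i(k+1) = \bigcup_{l \in N^-_i(1)} N^-_l(k)$; this is what lets the induction close at the right place. After the induction, setting $k=D$ and using $N^-_i(D) = \mathcal{V}$ (by the definition of diameter and the inclusion $(i,i) \in N^-_i$ which makes the $m$-in-neighborhoods monotone in $m$) gives $\|c^i(k'+D) - c^j(k')\| \leq R_i(D,k')$ for every $j \in \mathcal{V}$, which is precisely the statement that $c^j(k') \in B\{R_i(D,k'), c^i(k'+D)\}$.
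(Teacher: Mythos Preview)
Your proof is correct and follows essentially the same approach as the paper: induction on $k$ to establish the localized inequality $\|c^i(k'+k)-c^j(k')\|\leq R_i(k,k')$ for all $j$ within $k$ steps of $i$, using the triangle inequality through an intermediate in-neighbor and the recursion for $R_i$. The only cosmetic differences are that the paper phrases the induction hypothesis in terms of shortest-path length rather than $N^-_i(k)$ and starts the induction at $k=1$; your observation that the convex-decreasing hypothesis is not actually used here is also accurate.
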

\begin{proof}
 We first prove the following claim.
\begin{equation}\label{eq:radiusBallClaim}
    c^j(k')\in B\{R_i(k,k'),c^i(k'+k)\}
\end{equation}
for all $j \in \V$ such that the length of the shortest path from $i$ to $j$ is less than equal to $k.$ Clearly above claim is sufficient to prove (\ref{eq:radiusBall}) as when $k=D$, (\ref{eq:radiusBallClaim}) is valid for all $j \in \V$. Let the length of the shortest path from $i$ to $j$ be denoted as $|path(i,j)|$. 
We prove the claim using induction. For $k=1$,
\begin{equation*}
     R_i(1,k')=\underset{j\in  N^-_i}{\max} \|c^i(k'+1)-c^j(k')\|.
\end{equation*}
Then for all $j\in  N^-_i$, that is for all $j$ such that $|path(i,j)|\leq 1$, we get
\begin{align*}
     \|^i(k'+1)-c^j(k')\|& \leq R_i(1,k'), \mbox{ and thus }\\
     \implies c^j(k')\in &B\{R_i(1,k'),c^i(k'+1)\}.
\end{align*}
Thus the assertion holds for $k=1$. Now lets assume (\ref{eq:radiusBallClaim}) is true for $k$. Let $j$ be a node such that $|path(i,j)|\leq k+1$. Let $q$ be a neighbor of $i$ on the shortest path from $i$ to $j$, then $|path(q,j)|\leq k$. Then from induction assumption,
\begin{equation*}
    c^j(k')\in B\{R_q(k,k'),c^q(k+k')\},
\end{equation*}
that is,
\begin{equation}\label{eq:radiusBallInd}
     \|c^q(k'+k)-c^j(k')\| \leq R_q(k,k').
\end{equation}
From definition of $R_i(k+1,k') $,
\begin{equation*}
    \|c^i(k'+k+1)-c^q(k'+k)\| + R_q(k,k') \leq R_i(k+1,k') .
\end{equation*}
From triangle inequality,
\begin{multline*}
     \|c^i(k'+k+1)-c^j(k')\| \leq \|c^i(k'+k+1)-c^q(k'+k)\| \\
     + \|c^q(k'+k)-c^j(k')\|.
\end{multline*}
Using (\ref{eq:radiusBallInd}),
\begin{multline*}
     \|c^i(k'+k+1)-c^j(k')\| \leq \|c^i(k'+k+1)-c^q(k'+k)\| \\
     + R_q(k,k')
\end{multline*}
which implies that 
\begin{align*}
     \|c^i(k'+k+1)-c^j(k')\| \leq R_i(k+1,k')
\end{align*}
and thus,
\begin{align*}
    c^j(k')\in B\{R_i(k+1,k'),c^i(k'+k+1)\},
\end{align*}
and the result follows.
\end{proof}

Lemma~\ref{lem:RadiusUpdate} provides a distributed way to find a ball which encloses all the nodes. Only information needed by a node is the current radius of its neighbors (along with the states pertaining to ratio consensus) and it can determine the final radius within $D$ iteration. Further, since the ball $B\{R_i(D,k'),c^i(k'+D)\}$ encloses all the nodes, it also encloses the minimum ball, as mentioned earlier. Thus we have provided an algorithm to find an approximation of the minimum ball comprising of all nodes. We next present a framework which we use to prove that this radius converges to $0$ and can be used as a distributed stopping criterion.

Consider the coordinate-wise maximum and minimum of the states taken over all the agents at atime instant $k$ be given by, $M(k) = [M_1(k)\ M_2(k)\ \dots M_d(k)]^T$ and $m(k) = [m_1(k)\ m_2(k)\ \dots m_d(k)]^T$ respectively. That is, \begin{align}
    M_s(k)&\coloneqq\underset{i\in \mathcal{V}}{\max} \ c^i_s(k) \label{eq:maxvec}\\
  m_s(k)&\coloneqq\underset{i\in \mathcal{V}}{\min} \ c^i_s(k) \label{eq:minvec} 
\end{align}where $M_s(k) \in \mathbb{R}$, $m_s(k) \in \mathbb{R}$ for all $s \in \{1,2,\dots, d\}$ and $c_s^i(k)$ is the $s$-th elements of $c^i(k)$.
Then from \cite{ratio_consensus_lab}, for all time instants $k^{'}\geq k$ and for all $i\in \mathcal{V}$ and $s \in \{1,2,\dots, d\}$,
\begin{align}\label{eq:maxMinMono}
    m_s(k) \leq c^{i}_s(k^{'}) \leq M_s(k).
\end{align}
Further from \cite{ratio_consensus_lab}, for all $i\in \mathcal{V},\ l\geq 0$ and $s \in \{1,2,\dots, d\}$,
\begin{align}
     M_s((l+1)D) &< M_s(lD)\label{eq:maxStrict}\\
    m_s((l+1)D) &> m_s(lD).\label{eq:minStrict}
\end{align}
By using (\ref{eq:maxMinMono}), (\ref{eq:maxStrict}) and (\ref{eq:minStrict}), we can prove the following theorem.
\begin{theorem}\label{thm:mxpMNPCvg}
Consider $c$ to a convex decreasing consensus algorithm
\begin{align*}
     \lim_{l\to\infty}  M(lD)=\lim_{l\to\infty}  m(lD) = c^\infty \coloneqq \lim_{l \to \infty} c^i(l).
\end{align*}
where, $M(k)$ and $m(k)$ are as defined in \eqref{eq:maxvec} and \eqref{eq:minvec}.
\end{theorem}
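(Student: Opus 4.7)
The plan is to prove the theorem in three stages applied coordinate by coordinate: first, establish the existence of the two limits $\lim_l M_s(lD)$ and $\lim_l m_s(lD)$ by monotone convergence; second, show that these two limits coincide using the fact that a convex decreasing consensus algorithm has hull-diameter tending to zero; third, identify the common limit with $c^\infty_s$ via a sandwich argument using \eqref{eq:maxMinMono} together with the existence of $c^\infty$ supplied by Theorem \ref{thm: Convex decreasing consensus algorithms have good convergence properties}. Fix $s \in \{1,\dots,d\}$ throughout.

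For the first stage, I would observe that \eqref{eq:maxStrict} makes $\{M_s(lD)\}_{l \geq 0}$ strictly decreasing, while applying \eqref{eq:maxMinMono} with $k=0$ and $k'=lD$ gives $M_s(lD) \geq m_s(0)$, so the sequence is bounded below. Monotone convergence then yields a finite limit $M_s^\infty \coloneqq \lim_l M_s(lD)$. The symmetric argument, using \eqref{eq:minStrict} and the bound $m_s(lD) \leq M_s(0)$, produces a finite $m_s^\infty \coloneqq \lim_l m_s(lD)$.

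For the second stage, I would invoke Theorem \ref{thm: Convex decreasing consensus algorithms have good convergence properties}, which shows that for a convex decreasing consensus algorithm $diam_{\|\cdot\|}(co(c(k))) \to 0$ for any norm $\|\cdot\|$ on $\mathbb{R}^d$. Specializing to the $\ell^\infty$ norm and noting that $diam_{\|\cdot\|_\infty}(co(c(k))) = \max_{s} (M_s(k) - m_s(k))$, because the $\ell^\infty$ diameter of a finite set of vectors is the largest coordinate-wise range, one concludes $M_s(lD) - m_s(lD) \to 0$, hence $M_s^\infty = m_s^\infty$.

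For the third stage, fix $i \in \mathcal{V}$ and $l \geq 0$. Applying \eqref{eq:maxMinMono} with $k = lD$ gives $m_s(lD) \leq c^i_s(l'D) \leq M_s(lD)$ for every $l' \geq l$; letting $l' \to \infty$ and using that $c^i(l') \to c^\infty$ (guaranteed by Theorem \ref{thm: Convex decreasing consensus algorithms have good convergence properties}) yields $m_s(lD) \leq c^\infty_s \leq M_s(lD)$, and sending $l \to \infty$ then squeezes $m_s^\infty = c^\infty_s = M_s^\infty$. The only real subtlety, which I would flag as the main obstacle, is translating the coordinate-wise extremes $M_s,m_s$ into an $\ell^\infty$ diameter of the convex hull so that the norm-independent conclusion of Theorem \ref{thm: Convex decreasing consensus algorithms have good convergence properties} can be harvested; once that identification is in place, the rest is routine monotone convergence and sandwiching, with no delicate estimates required.
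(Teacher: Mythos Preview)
Your proof is correct, but the paper's argument is considerably more direct. The paper simply invokes Theorem~\ref{thm: Convex decreasing consensus algorithms have good convergence properties} to get $c^i(k)\to c^\infty$ for every $i\in\mathcal{V}$, and then observes that $M_s(k)=\max_i c^i_s(k)$ is a maximum over \emph{finitely many} scalar sequences all converging to $c^\infty_s$, so an elementary $\epsilon$-argument gives $M_s(k)\to c^\infty_s$ (and likewise $m_s(k)\to c^\infty_s$) along the full sequence; the subsequences $M_s(lD)$, $m_s(lD)$ then inherit the same limit. In particular the paper never appeals to \eqref{eq:maxMinMono}, \eqref{eq:maxStrict}, or \eqref{eq:minStrict}, nor to any identification of the hull diameter with the coordinate-wise range. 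Your three-stage route---monotone convergence for existence, the $\ell^\infty$-diameter identification for equality, and sandwiching for the value---works, but it deploys more machinery than necessary here; its advantage is that it makes the monotonicity structure explicit and would generalize to situations where one knows the hull diameter shrinks but does not yet know that each individual node sequence converges.
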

\begin{proof}
It follows from Theorem~\ref{thm: Convex decreasing consensus algorithms have good convergence properties} that $\lim\limits_{k\rightarrow \infty} c^i(k) = c_\infty$  for all $i \in \mathcal{V}$. This implies that for all $i \in \mathcal{V}$, $s \in  \{1,2,\dots, d\}$ and given $\epsilon >0$, there exists a $L>0$ such that for all $k \geq L, \|c^i_s(k)-c^\infty_s \|< \epsilon$. This implies that, there exists $Q>0$ such that for all $k\geq Q,\ \|\max\limits_{i} c^i_s(k)-c^\infty_s\| < \epsilon$. Similarly, $\|\min\limits_{i} c^i_s(k)-c^\infty_s\| < \epsilon$. Thus, it follows that, $\lim\limits_{k \rightarrow \infty} M_s(k) = c^\infty_s$ and $\lim\limits_{k \rightarrow \infty} m_s(k) = c^\infty_s$.  As subsequences of convergent subsequences, the same conclusion follows for $M(lD)$ and $m(lD)$.  \end{proof}

 
\begin{corollary}\label{cor:mxpMNP}
For $c$ a convex decreasing consensus algorithm,  then, 
\begin{align} \label{eq: min minus max to zero}
    \lim\limits_{l \rightarrow \infty}||M(lD)-m(lD)|| = 0.
\end{align}
In particular \eqref{eq: min minus max to zero} holds for the ratio consensus protocol of (\ref{eq:numeratorState}), (\ref{eq:denominatorState}) and (\ref{eq:ratioState}) when Assumptions ~\ref{ass:StrConn} and~\ref{ass:colSto} hold.
 \end{corollary}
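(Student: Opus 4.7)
The plan is to observe that the corollary follows almost immediately from Theorem~\ref{thm:mxpMNPCvg} combined with the continuity of the norm and, for the second statement, from Theorem~\ref{thm: examples of convex decreasing}. So there is no substantive obstacle; the proof is a short deduction rather than a new argument.

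More precisely, my first step would be to invoke Theorem~\ref{thm:mxpMNPCvg} to conclude that both coordinate-wise max and min subsequences along multiples of $D$ converge to the same limit, namely the common consensus value $c^{\infty}$. Since the norm $\|\cdot\|$ on $\mathbb{R}^d$ is continuous, I would then write
\begin{align*}
    \lim_{l \to \infty} \| M(lD) - m(lD) \|
        =
            \Bigl\| \lim_{l \to \infty} M(lD) - \lim_{l \to \infty} m(lD) \Bigr\|
        =
            \| c^{\infty} - c^{\infty} \|
        =
            0,
\end{align*}
which gives the first assertion.

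For the second assertion, the remaining task is to justify applying the general result to the specific ratio consensus protocol defined by \eqref{eq:numeratorState}, \eqref{eq:denominatorState} and \eqref{eq:ratioState}. Here I would simply cite Theorem~\ref{thm: examples of convex decreasing}, which asserts that under Assumptions~\ref{ass:StrConn} and~\ref{ass:colSto}, the ratio-consensus update defines a convex decreasing consensus algorithm. The first part of the corollary therefore applies to the sequence $c^i(k) = r^i(k)$, producing the desired conclusion.

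The only point that might warrant a sentence of care is the fact that Theorem~\ref{thm:mxpMNPCvg} was stated with the limit of $c^i(k)$ identified as a single point $c^{\infty}$ independent of $i$; this is exactly the content of the first part of Theorem~\ref{thm: Convex decreasing consensus algorithms have good convergence properties}, and is what allows the two coordinate-wise extremes to collapse to a single common limit. Once this is noted, nothing further needs to be proved.
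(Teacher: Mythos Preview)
Your proposal is correct and follows essentially the same approach as the paper, which simply states that the result follows directly from Theorem~\ref{thm:mxpMNPCvg}. You have merely made explicit the continuity-of-norm step and the appeal to Theorem~\ref{thm: examples of convex decreasing} for the ratio-consensus case, both of which are implicit in the paper's one-line proof.
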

 
\begin{proof}
The proof directly results from Theorem~\ref{thm:mxpMNPCvg}
. \end{proof}

It is clear from Lemma~\ref{lem:RadiusUpdate} that at any instant $k$, all agents' states are within $2R_i(D,k)$ of each other, that is,
\begin{equation}
    \underset{i,j\in \V}{\max} \|c^i(k)-c^j(k)\| \leq 2R_i(D,k).
\end{equation}
Thus if $R_i(D,k)$ is within a tolerance $\rho/2$, all the agents ratio state will be within $\rho$ of consensus. We next provide convergence result for $R_i(D,k)$ as $k \to \infty$.
\begin{theorem}\label{thm:radiusConvg}
For a distributed convex decreasing consensus algorithm $c$ 
and update as in (\ref{eq:radiusUpdate}). Let $\overline{R_i}(l):=R_i(D,lD)$ for $l=0,1,2,\ldots$ and all $i \in \V.$ Then
\begin{align*}
    \lim\limits_{l \rightarrow \infty}\overline{R_i}(l)&= 0
\end{align*}
for all $i \in \V.$
\end{theorem}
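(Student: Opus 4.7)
The plan is to show that $R_i(D, k')$ is dominated by a multiple of the diameter of the convex hull of the network state at time $k'$, and then to invoke the convex-decreasing property together with Theorem~\ref{thm: Convex decreasing consensus algorithms have good convergence properties} to conclude that this diameter tends to zero.

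First I would unfold the recursion defining $R_i(k+1,k')$. Since $R_i(0,k') = 0$ and each step adds a single norm contribution, an inductive unrolling yields
\begin{align*}
    R_i(D,k') = \max_{(j_0,j_1,\ldots,j_D)} \sum_{m=0}^{D-1} \| c^{j_{m+1}}(k'+m+1) - c^{j_m}(k'+m) \|,
\end{align*}
where the maximum is taken over all length-$D$ paths $j_0,j_1,\ldots,j_D = i$ with $j_m \in N_{j_{m+1}}^-$ for each $m$. This step is purely combinatorial.

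Next, I would bound each summand using the convex-decreasing property. For every $m \in \{0,1,\ldots,D-1\}$, both points $c^{j_{m+1}}(k'+m+1)$ and $c^{j_m}(k'+m)$ lie in the nested hulls $co(c(k'+m+1)) \subseteq co(c(k'+m)) \subseteq co(c(k'))$, so that $\| c^{j_{m+1}}(k'+m+1) - c^{j_m}(k'+m) \| \leq diam_{\|\cdot\|}(co(c(k')))$. Summing over the $D$ edges of the path and taking the maximum yields the key estimate
\begin{align*}
    \overline{R_i}(l) = R_i(D, lD) \leq D \cdot diam_{\|\cdot\|}\bigl(co(c(lD))\bigr).
\end{align*}

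Finally, I would appeal to Theorem~\ref{thm: Convex decreasing consensus algorithms have good convergence properties}, which asserts that for any convex-decreasing consensus algorithm $diam_{\|\cdot\|}(co(c(n))) = \max_{ij}\|c^i(n) - c^j(n)\| \to 0$ as $n \to \infty$ (this is explicit in the proof of that theorem, where $\cap_n co(S_n)$ collapses to the single point $S_\infty$). Passing to the subsequence $n = lD$ and multiplying by the fixed constant $D$ gives $\overline{R_i}(l) \to 0$, which completes the argument. The only subtle point will be verifying the unfolding step carefully, since the recursion mixes a norm term indexed by the current time with an $R_j$ term indexed by the previous time; once this bookkeeping is done correctly the remainder of the argument is immediate from the already-established theory.
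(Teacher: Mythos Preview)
Your proposal is correct and follows essentially the same strategy as the paper: unroll the recursion for $R_i(\cdot,k')$, bound each of the $D$ increments by a quantity depending only on the spread of the states at time $k'$, and then let that quantity tend to zero. The only difference is cosmetic: the paper bounds each increment by $\|M(lD)-m(lD)\|$ (the norm of the coordinate-wise max minus the coordinate-wise min) and invokes Corollary~\ref{cor:mxpMNP}, whereas you bound each increment directly by $diam_{\|\cdot\|}(co(c(lD)))$ and invoke Theorem~\ref{thm: Convex decreasing consensus algorithms have good convergence properties}. Your route is slightly cleaner, since it bypasses the $M,m$ machinery entirely and works for an arbitrary norm without appealing to coordinate-wise monotonicity; the paper's bound implicitly relies on the coordinate-wise box containing the hull, which your diameter bound sidesteps.
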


\begin{proof}
From definition,
\begin{equation*}
R_i(1,lD)=\underset{j\in  N^-_i}{\max}\{\|c^i(1+lD)- c^j(lD)\|+ R_j(0,lD)\}
\end{equation*}
which implies,
\begin{multline}\label{eq:rL1}
R_i(1,lD)\leq\underset{j\in  N^-_i}{\max}\|c^i(1+lD)-c^j(lD)\|\\
+\underset{j\in  N^-_i}{\max} R_j(0,lD)
\end{multline}
Let $M(lD)$ and $m(lD)$ be as defined in Theorem~\ref{thm:mxpMNPCvg}. 
As $c^i(1+lD)\geq m(lD)$ and $c^j(lD) \leq M(lD)$ from (\ref{eq:maxMinMono}), we get
\begin{equation}\label{eq:rL2}
    \|c^i(1+lD)- c^j(lD)\| \leq \|M(lD)-m(lD)\|
\end{equation}
Then using  (\ref{eq:rL2}) in (\ref{eq:rL1}) and observing $R_j(0,lD)=0$ for all $j \in \V$, we get
\begin{equation}\label{eq:rL3}
R_i(1,lD)\leq  \|M(lD)-m(lD)\|.
\end{equation}
Similarly,
\begin{multline}\label{eq:rL4}
R_i(2,lD)\leq\underset{j\in  N^-_i}{\max}\|c^i(2+lD)-c^j(1+lD)\|+\\
    \underset{j\in  N^-_i}{\max} R_j(1,lD)
\end{multline}
Again as $c^i(2+lD)\geq m(lD)$ and $c^j(1+lD) \leq M(lD)$, we have
\begin{equation}\label{eq:rL5}
    \|c^i(2+lD)- c^j(1+lD)\| \leq \|M(lD)-m(lD)\|
\end{equation}
Then using  (\ref{eq:rL3}), (\ref{eq:rL4}) and (\ref{eq:rL5}), we get
\begin{equation*}
R_i(2,lD)\leq  2\|M(lD)-m(lD)\|.
\end{equation*}
Following the same process, we have
\begin{equation}
\overline{R_i}(l) = R_i(D,lD)\leq  D\|M(lD)-m(lD)\|.
\end{equation}
Then from Corollary~\ref{cor:mxpMNP},
\begin{equation*}
    \lim\limits_{l \rightarrow \infty}\overline{R_i}(l)= 0 
\end{equation*}
\end{proof}

Notice that $R_i(l)$ can be different for different nodes and each node might detect $\rho$-convergence ($R_i(l)<\rho$) at different time instants. According to Lemma~\ref{lem:RadiusUpdate}, once $R_i(l)<\rho$ for any $i \in \V$, $\|c^i(lD)-c^j(lD)\|<2\rho$, that is the ratio state is within $2\rho$ of consensus value, and the consensus is achieved. Further, any node $i$ which detects convergence can propagate a ``converged flag" in the network. To take that into account, we run a separate $1$-bit consensus algorithm (denoted as convergence consensus) for each node where each node maintains a convergence state $b_i(k)$ and shares it with neighbors. Each node initializes $b_i(k)$ at every $lD$ iteration for $l \in \{0,1,2,\dots\}$ with $1$ or $0$ depending on the node has detected convergence or not, and updates its value on every iteration using,
\begin{align}\label{eq:bitCons}
    b_i(k+1)= \underset{j\in N^-_i}{\bigcup}b_j(k),
\end{align}
\noindent where $\bigcup$ denotes OR operation, $k\geq 0$ and $b_j(0)=1$ if node $j$ has detected convergence at initialization instant $0$ and $b_j(0)=0$ otherwise. Clearly, if $b_j(0)=1$ for any $j\in\V$, then $ b_i(D)=1$ for all $i\in\V$ where $D$ is the diameter. Thus each node can use $b_i(D)$ as a stopping criterion.

Using above discussion and Theorem~\ref{thm:radiusConvg}, we present an algorithm (see Algorithm~\ref{alg:radiusAlg}) instantiating the result for ratio consensus (which could easily be adapted for more general settings), which determines the radius $\overline{R_i}(l)$ for $l=0,1,2,\ldots$ and all $i \in \V$  and provides a finite-time stopping criterion for vector consensus.

\begin{theorem}
Algorithm~\ref{alg:radiusAlg} converges in finite-time simultaneously at each node.
\end{theorem}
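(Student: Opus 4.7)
The plan is to combine Theorem~\ref{thm:radiusConvg}, which gives pointwise convergence $\overline{R_i}(l) \to 0$ for every node, with the OR-based $1$-bit consensus in \eqref{eq:bitCons}, which broadcasts any locally detected termination flag through the network in at most $D$ iterations. First I would fix a tolerance $\rho > 0$ and apply Theorem~\ref{thm:radiusConvg} to produce, for each $i \in \V$, a finite epoch index $L_i$ with $\overline{R_i}(l) < \rho$ for all $l \ge L_i$. Taking $L^* \coloneqq \min_{i \in \V} L_i < \infty$, at least one node $j$ must detect convergence at epoch $L^* D$ and initialize $b_j(0) = 1$ at that instant in \eqref{eq:bitCons}.

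The next step is to show that this flag propagates to every node within $D$ iterations of the bit consensus. I would proceed by induction on the path distance $m$: if $b_j(0) = 1$ and node $i$ lies within $m$ directed hops of $j$, then $b_i(m) = 1$. The base case $m = 0$ is the initialization at $j$; for the inductive step, if $i$ is at distance $m+1$ from $j$ then some in-neighbor $i'$ of $i$ lies at distance $m$, and the OR-update in \eqref{eq:bitCons} forces $b_i(m+1) \ge b_{i'}(m) = 1$. Because Assumption~\ref{ass:StrConn} bounds the diameter by $D$, after $D$ iterations $b_i(D) = 1$ holds for every $i \in \V$. Since every agent reads its bit at the common global iteration $L^* D + D$, the termination decision is taken simultaneously.

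To close the loop I would appeal to Lemma~\ref{lem:RadiusUpdate}: at the instant of termination the ball of radius $\overline{R_i}(L^*) < \rho$ centered at $c^i((L^*+1) D)$ encloses every other agent's state, so the simultaneous stopping decision certifies that all pairs of agents are within $2\rho$ of one another, which is the desired $\rho$-consensus guarantee. Finiteness of the stopping time follows since $L^* < \infty$.

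The main obstacle I anticipate is a bookkeeping one: aligning the outer epoch index $l$ (which advances every $D$ iterations and over which $\overline{R_i}(l)$ is accumulated via \eqref{eq:radiusUpdate}) with the concurrent $D$-iteration broadcast window of \eqref{eq:bitCons}, so that a flag raised at epoch $L^* D$ is guaranteed to reach every node \emph{before} the next epoch overwrites the bit via re-initialization. Verifying that the scheduling in Algorithm~\ref{alg:radiusAlg} executes ``detect, flag, broadcast, stop'' in a single epoch in the correct order is the delicate point; once that synchronization is confirmed, the monotone OR-propagation and the pointwise convergence of $\overline{R_i}(l)$ make the rest of the argument immediate.
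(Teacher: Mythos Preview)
Your proposal is correct and follows the same approach as the paper: invoke the convergence $\overline{R_i}(l)\to 0$ (the paper cites Corollary~\ref{cor:mxpMNP}, you cite Theorem~\ref{thm:radiusConvg}, which is the more precise reference) to guarantee some node eventually sets its bit, then let the OR-update \eqref{eq:bitCons} carry that bit to all nodes within $D$ steps so that termination is simultaneous. Your version is more detailed than the paper's---you spell out the induction on path distance and add the $2\rho$-accuracy certificate via Lemma~\ref{lem:RadiusUpdate}, neither of which the paper writes out---and your bookkeeping concern about epoch alignment is legitimate but is handled implicitly by the algorithm's scheduling rather than argued in the paper's proof.
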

\begin{proof}
From Corollary~\ref{cor:mxpMNP}, it follows that $\overline{R_i}(l)\rightarrow0$ as $l\rightarrow\infty.$ Thus, for any given $\rho>0$ and node $i \in \V$ there exists an integer $t(\rho,i)$ such that for $l= t(\rho,i),$ $\overline{R_i}(l)<\rho$. As each node has access to $\overline{R_i}(l)$, convergence can be detected by each node and the convergence bit $b_i(lD+1)$ will be set to 1. Thus $b_i(lD+D+1)=1$ for all $i\in \V$ and algorithm will stop simultaneously at each node.\end{proof}

\begin{algorithm}[h]
    \SetKwBlock{Initialize}{Initialize:}{}
    \SetKwBlock{Input}{Input:}{}
    \SetKwBlock{Repeat}{Repeat:}{}
    \Input{$\rho$, $x^i(0)$ \tcp*{Initial condition}}
    \Initialize{$k := 0$; $R_i(0):= 0$; $y_i(0)=1$;$b_i(0)=0$; $l := 1$; }
    \Repeat {
    \Input{$x^j(k),y_j(k),R_j(k), j\in N^{-}_i$ }
    \tcc{ratio consensus updates of node $i$ given by (\ref{eq:numeratorState}), (\ref{eq:denominatorState}) and (\ref{eq:ratioState})}
        $x^{i}(k+1) := \sum\limits_{j\in \mathit{N^-_i}}p_{ji}(k)x^{j}(k)$; \\
        $y_{i}(k+1) := \sum\limits_{j\in \mathit{N^-_i}}p_{ji}(k)y_{j}(k)$;\\        
        $r^{i}(k+1) := \frac{1}{y_{i}(k+1)}x^{i}(k+1)$;\\
        \tcc{radius updates of node $i$ given by (\ref{eq:radiusUpdate}) }
        $R_i(k+1):=\underset{j\in  N^-_i}{\max}\{\|r^i(k+1)-r^j(k)\| + R_j(k)\}$\\
        \tcc{convergence bit update of node $i$ given by (\ref{eq:bitCons}) }
        $b_i(k+1)= \underset{j\in N^-_i}{\bigcup}b_j(k,)$\\
        \If {$ k=  lD$} {
        \uIf {$b_i(k+1)=1$} {\textbf{break} \tcp*{stop $x^i, y_i, r^i$, $R_i$ and $b_i$ updates}
            }
        \uElse {
            $\overline{R_i}(l)=R_i(k+1)$;\\
            \uIf {($\overline{R_i}(l) < \rho$)} {$b_i(k+1)=1$ \tcp*{set convergence bit to 1}
                }
            \uElse {$R_{i}(k+1) = 0$; $b_i(k+1)=0$;\\
             $l = l+1$;\\
            }
            } 
        }
        $k=k+1$;\\
    }\vspace{0.1in}
    \caption{Finite-time termination of ratio consensus in higher-dimension $d$ (at each node $i \in \V$)}
    \label{alg:radiusAlg}
\end{algorithm}

\begin{remark}\label{rem1}
Notice that using the above protocol, each node detects convergence simultaneously. Further, the only global parameter needed for Algorithm~\ref{alg:radiusAlg} is the knowledge of diameter $D$. However, it should be noted that an upper bound will suffice. In most applications, an upper bound on the diameter $D$ is readily available. 
\end{remark}
\begin{remark}\label{rem2}
It is to note here that for Algorithm~\ref{alg:radiusAlg}, only extra communication required between nodes is passing of the current radius at each node which is just a scalar along with a single bit for convergence consensus. Therefore the extra bandwidth required for each neighbor-neighbor interaction is $B+1$ where $B$ is the bit length (usually 32) for floating point representation. Thus, the above protocol is suitable for \textit{ad-hoc} communication networks where communication cost is high and bandwidth is limited.
\end{remark}

A finite-time termination criterion for vector consensus was previously provided in \cite{khatana2019gradient}. There, each element of ratio state required a maximum-minimum protocol (see (\ref{eq:maxvec}) and (\ref{eq:minvec})), with stopping criterion given by,
\begin{align*}
    \left\|\underset{s\in \{1,2,\dots,d\}}{\max}\underset{i\in \V}{\max}r^i_s(k)-\underset{s\in \{1,2,\dots,d\}}{\min}\underset{i\in \V}{\min}r^i_s(k) \right\|<\rho
\end{align*}
This maximum-minimum is a special case for finding a minimum convex set in the form of a hyper rectangle (box) which encompasses all the points. Here, at each iteration, two extra states are shared by each node, namely, one state for element-wise maximum and the other for element-wise minimum. Thus the extra communication bandwidth required for this algorithm is $2Bd$. An example case where $d=10, B=32$, requires an extra bandwidth of $640$ bits per interaction. For this example, Algorithm~\ref{alg:radiusAlg} only requires $B+1=33$ extra bits of communication per interaction, providing a reduction of more than $19$x. Thus for the applications with high dimensional vector consensus (like GANs, as described in the introduction, see \cite{goodfellow2014generative}), the algorithm reported here provides a reliable distributed stopping criterion with significantly less communication bandwidth.

\section{Applications of finite-time terminated average consensus in higher dimensions}

\subsection{Least Squares Estimation}

We follow \cite{garin2010survey} in our exposition of the least squares problem as solved by consensus.

Consider the problem of estimating a function $y = \varphi_\theta(x)$, given a noisy dataset $ \{(x_j, y_j) \}_{j=1}^N$ under the assumption that $\varphi_\theta(x)$ is a linear combination of known functions $g_i(x)$.  Explicitly $\varphi_\theta(x) = \sum_{i=1}^M \theta_i g_i(x)$.  Defining for $1 \leq j \leq N$, vectors $g^j = (g_1(x_j), \dots, g_M(x_j))^T$ and ${\bf G}$ to be the matrix formed by taking columns $G_j = g^j$. Then taking for granted the invertibility of the relevant matrices, the least squares estimate $\hat{\theta}$ of $\varphi_\theta$ is given by taking 
\begin{align} \label{eq: least squares estimator}
    \hat{\theta} &= \argmin_\theta | y - {\bf G} \theta |^2 
        \\
    &= ({\bf G^T} {\bf G})^{-1} {\bf G} y
        \\
    &= \left( \frac 1 N \sum_{j=1}^N g^j(g^j)^T \right)^{-1}\left( \frac 1 N \sum_{j=1}^N g^j y_j \right),
\end{align}
where the first equality is derived from setting the gradient of $\theta \mapsto |y - {\bf G} \theta|^2$ to zero and the second follows from algebra.

Initializing an average consensus algorithms on $N$ nodes, initialized to $f_j(0) = (M_j(0),z_j(0)) = (g^j(g^j)^T, g^j y_j)$, a node $i$, can form a consensus estimate of $\hat{\theta}$ at time $n$, by
\begin{align} \label{eq: consensus estimate of LSE}
    \theta_i(n) = {\bf M}_i^{-1}(n) z_i(n).
\end{align}
Indeed, as an average consensus algorithm, 
\begin{align*}
    {\bf M} \coloneqq \frac 1 N \sum_{j=1}^N g^j(g^j)^T = \lim_n {\bf M}_i(n), 
\end{align*}
and
\begin{align*}
    z \coloneqq \frac 1 N \sum_{j=1}^N g^j y_j = \lim_n z_i(n),
\end{align*}
thus
\begin{align*}
    \lim_n \theta_j(n) = \lim_n {\bf M}_i^{-1}(n) z_i(n) = {\bf M}^{-1} z = \hat{\theta}.
\end{align*}

For a matrix ${\bf A}$, let $\|{\bf A}\|_{op} = \inf \{ c: |{\bf A}x| \leq c|x| \}$

\begin{lem} \cite{horn2012matrix} \label{lem: Horn matrix lemma}
    For invertible matrices, ${\bf A}$ and ${\bf B}$,
    \begin{align*}
        \| {\bf A}^{-1} - {\bf B}^{-1} \|_{op} \leq \frac{\|{\bf A}^{-1}\|_{op}^2}{1 - \|{\bf A}^{-1}\|_{op} \| {\bf B}- {\bf A}\|_{op} } \|{\bf B}-{\bf A}\|_{op}
    \end{align*}
\end{lem}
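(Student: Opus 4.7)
The plan is to reduce everything to the resolvent identity together with a Neumann series control of $\|{\bf B}^{-1}\|_{op}$. The starting observation is the algebraic identity
\begin{align*}
{\bf A}^{-1} - {\bf B}^{-1} = {\bf A}^{-1}({\bf B} - {\bf A}){\bf B}^{-1},
\end{align*}
obtained by left-multiplying ${\bf B} - {\bf A}$ by ${\bf A}^{-1}$ and right-multiplying by ${\bf B}^{-1}$. Applying submultiplicativity of the operator norm immediately yields
\begin{align*}
\|{\bf A}^{-1} - {\bf B}^{-1}\|_{op} \leq \|{\bf A}^{-1}\|_{op}\, \|{\bf B} - {\bf A}\|_{op}\, \|{\bf B}^{-1}\|_{op},
\end{align*}
so the remaining work is to replace the factor $\|{\bf B}^{-1}\|_{op}$ by a quantity that depends only on ${\bf A}^{-1}$ and the perturbation ${\bf B} - {\bf A}$.

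Next I would write ${\bf B} = {\bf A}({\bf I} + {\bf A}^{-1}({\bf B} - {\bf A}))$, so that ${\bf B}^{-1} = ({\bf I} + {\bf A}^{-1}({\bf B} - {\bf A}))^{-1} {\bf A}^{-1}$. Under the hypothesis $\|{\bf A}^{-1}\|_{op}\|{\bf B} - {\bf A}\|_{op} < 1$ (tacit in the bound, since otherwise the right-hand side is nonpositive or undefined), the matrix $({\bf I} + {\bf A}^{-1}({\bf B} - {\bf A}))^{-1}$ admits a convergent Neumann expansion $\sum_{k=0}^\infty (-1)^k ({\bf A}^{-1}({\bf B} - {\bf A}))^k$. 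Summing a geometric series in operator norm gives
\begin{align*}
\|({\bf I} + {\bf A}^{-1}({\bf B} - {\bf A}))^{-1}\|_{op} \leq \frac{1}{1 - \|{\bf A}^{-1}\|_{op} \|{\bf B} - {\bf A}\|_{op}},
\end{align*}
and hence $\|{\bf B}^{-1}\|_{op} \leq \|{\bf A}^{-1}\|_{op}/(1 - \|{\bf A}^{-1}\|_{op}\|{\bf B} - {\bf A}\|_{op})$.

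Substituting this bound into the earlier inequality for $\|{\bf A}^{-1} - {\bf B}^{-1}\|_{op}$ produces exactly the claimed estimate. The only subtle point, and the main place where care is needed, is justifying convergence of the Neumann series: this uses that the operator norm is submultiplicative and that geometric series in a Banach algebra converge when the generator has norm strictly less than one. Everything else is bookkeeping.
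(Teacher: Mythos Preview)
Your proof is correct and is the standard argument for this perturbation bound. Note that the paper does not actually supply a proof of this lemma; it is stated with a citation to Horn and Johnson's \emph{Matrix Analysis} and used as a black box in the subsequent theorem, so there is no paper-side argument to compare against. Your identification of the implicit hypothesis $\|{\bf A}^{-1}\|_{op}\|{\bf B}-{\bf A}\|_{op} < 1$ is also appropriate, since without it the stated bound is vacuous or meaningless.
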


\begin{theorem}
    The average consensus estimate $\theta_i(n)$ in~(\ref{eq: consensus estimate of LSE}), of the least squared estimator $\hat{\theta}$ in~(\ref{eq: least squares estimator}) satisfies
    \begin{align*}
        | \theta_i(n) - \hat{\theta} | \leq m |z_i(n) - z| + C \|{\bf M}_i(n) - {\bf M} \|_{op}
    \end{align*}
    where
    \begin{align*}
        m &= \| {\bf M}^{-1}_i(n)\|_{op}
            \\
        C &= \frac{m^2 ( |z_i(n)| + |z_i(n) - z|)}{ 1 - m \|{\bf M}_i(n) - {\bf M} \|_{op}}
    \end{align*}
\end{theorem}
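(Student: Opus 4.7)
The plan is to reduce the theorem to two off-the-shelf estimates: a triangle-inequality bound applied to a carefully chosen decomposition of $\theta_i(n) - \hat{\theta}$, followed by a single application of Lemma~\ref{lem: Horn matrix lemma} to control the perturbation of the matrix inverse ${\bf M}_i^{-1}(n)$ away from ${\bf M}^{-1}$.

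First, I would insert the intermediate term ${\bf M}_i^{-1}(n)\, z$ and regroup, yielding the identity
$$\theta_i(n) - \hat{\theta} \;=\; {\bf M}_i^{-1}(n) z_i(n) - {\bf M}^{-1} z \;=\; {\bf M}_i^{-1}(n)\bigl(z_i(n) - z\bigr) + \bigl({\bf M}_i^{-1}(n) - {\bf M}^{-1}\bigr) z.$$
Taking norms and using submultiplicativity of $\|\cdot\|_{op}$ gives the preliminary bound
$$|\theta_i(n) - \hat{\theta}| \;\leq\; m\, |z_i(n) - z| \;+\; \|{\bf M}_i^{-1}(n) - {\bf M}^{-1}\|_{op}\, |z|,$$
with $m = \|{\bf M}_i^{-1}(n)\|_{op}$ exactly as in the statement, so the first term already matches the claimed bound.

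Next I would invoke Lemma~\ref{lem: Horn matrix lemma} with ${\bf A} = {\bf M}_i(n)$ and ${\bf B} = {\bf M}$ to estimate
$$\|{\bf M}_i^{-1}(n) - {\bf M}^{-1}\|_{op} \;\leq\; \frac{m^2\, \|{\bf M}_i(n) - {\bf M}\|_{op}}{1 - m\,\|{\bf M}_i(n) - {\bf M}\|_{op}}.$$
Finally, to express the constant $C$ in terms of the locally available quantity $|z_i(n)|$ rather than the unknown limit $|z|$, I would apply one last triangle inequality, $|z| \leq |z_i(n)| + |z_i(n) - z|$, and collect terms; this produces the stated constant $C$.

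I do not expect any genuine obstacle — each step is a one-line manipulation or a direct citation. The only aesthetic point worth flagging is the last triangle inequality on $|z|$, which is what converts the bound into a quantity computable locally at node $i$ together with the perturbations $z_i(n) - z$ and ${\bf M}_i(n) - {\bf M}$, matching the distributed-stopping spirit of the rest of the article.
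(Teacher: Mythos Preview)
Your proposal is correct and mirrors the paper's proof essentially step for step: the same add-and-subtract decomposition via ${\bf M}_i^{-1}(n)z$, the same application of Lemma~\ref{lem: Horn matrix lemma} with ${\bf A}={\bf M}_i(n)$ and ${\bf B}={\bf M}$, and the same final triangle inequality $|z|\leq |z_i(n)|+|z_i(n)-z|$ to render the constant locally computable. The only cosmetic difference is the order in which the Horn lemma and the bound on $|z|$ are applied, which is immaterial.
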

Note the terms  $\|{\bf M}_i(n) - {\bf M}\|_{op}$ and $|z_i(n) - z|$ can be bounded through the finite time stopping criteria.  Further the terms $m$ and $C$ depend only on locally computable terms.  Thus the theorem demonstrates that not only can agents perform a distributed least square estimate, they can obtain error bounds on their estimates in a distributed fashion.  That $\lim_n m = \| {\bf M}^{-1} \|_{op}$ and $\lim_n C = \| {\bf M}^{-1} \|_{op}^2 |z|$ ensure convergence of the algorithm.

\begin{proof}
    For suppress we write ${\bf M}_i = {\bf M}_i(n)$ and $z_i = z_i(n)$ By definition, and the triangle inequality
    \begin{align*}
        | \theta_i(n) &- \hat{\theta} |
                \\
            &=
                | {\bf M}_i^{-1} z_i - {\bf M}^{-1} z|
                    \\
            &=
                |{\bf M}_i^{-1} (z_i-z) - ({\bf M}^{-1}-{\bf M}_i^{-1})(z)|
                    \\
            &\leq
                |{\bf M}_i^{-1} (z_i - z)| + |({\bf M}^{-1} - {\bf M}_i^{-1})(z)|.
    \end{align*}
    Applying the definition of the operator norm,
    \begin{align} 
        |{\bf M}_i^{-1} (z_i-z)|
            &\leq
                \| {\bf M}_i^{-1} \|_{op} |z_i-z|
                    \\
            &= 
                m |z_i -z|.\label{eq: first lse needed}
    \end{align}
        Further,
    \begin{align}
        |({\bf M}^{-1} &- {\bf M}_i^{-1})(z)|
            \\
            &\leq
                \|{\bf M}^{-1} - {\bf M}_i^{-1}\|_{op} |z|
                    \\
            &\leq
                \|{\bf M}^{-1} - {\bf M}_i^{-1}\|_{op} (|z_i| + |z- z_i|).  \label{eq: needs operator bound for inverses}
    \end{align}
    Applying Lemma \ref{lem: Horn matrix lemma} with ${\bf A} = {\bf M}_i$ and ${\bf B} = {\bf M}$,
    \begin{align} \label{eq: Horn applied}
        \|{\bf M}^{-1} - {\bf M}_i^{-1}\|_{op} 
            \leq
               \frac{\|{\bf M}_i^{-1}\|_{op}^2}{1 - \|{\bf M}_i^{-1}\|_{op} \| {\bf M}- {\bf M}_i\|_{op} } \|{\bf M}-{\bf M}_i\|_{op}.
    \end{align}
    Inserting \ref{eq: Horn applied} into \ref{eq: needs operator bound for inverses} gives
    \begin{align} \label{eq: second lse needed}
        |({\bf M}^{-1} &- {\bf M}_i^{-1})(z)| \leq C\|{\bf M}_i - {\bf M}\|_{op}.
    \end{align}
    Combining \eqref{eq: first lse needed} and \eqref{eq: second lse needed} gives our result.
\end{proof}

\subsection{Distributed Function Calculation}
Here, we give another application of the average consensus protocol in higher dimensions. We focus on the problem of computing arbitrary functions of agent state values over sensor networks, see \cite{giridhar2005computing}. In particular,  given a directed graph $\G(\V, \E)$ with $n$ nodes representing the communication constraints in a sensor network, the objective is to design an interaction rule for the nodes in the network to cooperatively compute a desired function $f(u_1(0), u_2(0), \dots, u_N(0))$, of the initial values $u_i(0) \in \mathbb{R}, i \in \{1, 2, \dots, N\}$. 
Such a problem is of interest in wireless sensor networks, see \cite{giridhar2005computing} where, the sink nodes in the sensor networks has to carry out the task of communicating
a relevant function of the raw sensor measurements. Another
example is the case of coordination tasks in multi-agent systems as given in \cite{ren2005survey},  \cite{olfati2007consensus}, where all agents communicate with each other to
coordinate their speed and direction of motion. Consider, the directed graph $\G(\V,\E)$ modeling the interconnection topology between the $n$ agents. Let each agent maintain three variables denoted by $x^i(k) \in \mathbb{R}^d, \ y_i(k) \in \mathbb{R}$ and $r^i(k) \in \mathbb{R}^d$ with the following initialization:
\begin{align}
    x^i(0) &= [0 \dots 0 \ Nu_i(0) \ 0 \dots 0]^T \in \mathbb{R}^d \label{eq:initialization1}\\
    y_i(0) &= 1 \label{eq:initialization2} \\
    r^i(0) &= \frac{1}{y_i(0)}{r^i(0)}. \label{eq:initialization3} 
\end{align}
The estimates $x^i(k)$, $y_i(k)$ and $r^i(k)$ are updated according to~(\ref{eq:numeratorState}), (\ref{eq:denominatorState}) and~(\ref{eq:ratioState}) respectively.
The following theorem upper bounds the error in distributed function calculation by the error in the consensus estimation.
\begin{theorem} \label{thm:func_cal}
Let $\G(\V,\E)$ and $P = [p_{ji}]$ associated with $\G(\V,\E)$ satisfy Assumptions~\ref{ass:StrConn} and~\ref{ass:colSto} respectively. Denote by $\{x^i(k)\}_{k\geq 0}, \{y_i(k)\}_{k\geq 0}$ and $\{r^i(k)\}_{k\geq 0}$ the sequences generated by~(\ref{eq:numeratorState}), (\ref{eq:denominatorState}) and~(\ref{eq:ratioState}) respectively. Under the initialization~(\ref{eq:initialization1})-(\ref{eq:initialization3}) the estimates $r^i(k)$ asymptotically converges to $\bar{r}:= \lim \limits_{k \rightarrow \infty} \frac{1}{y_i(k)}x^i(k) = [u_1(0) \ u_2(0) \dots u_N(0)]^T$ for all $i \in \{1,...,N\}$, and $f$ is $C$-Lipschitz, or more generally $\alpha$-H\"older continuous with constant $C$, then
\begin{align*}
    | f(r^i(k)) - f(\bar{r})| \leq   C \| r^i(k) - \bar{r} \|^\alpha.
\end{align*}col
\end{theorem}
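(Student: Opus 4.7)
The plan is to decompose the statement into two pieces: first, establish convergence of $r^i(k)$ to the vector $\bar r = [u_1(0), \dots, u_N(0)]^T$ under the specific initialization \eqref{eq:initialization1}--\eqref{eq:initialization3}; second, obtain the error bound for $|f(r^i(k)) - f(\bar r)|$ as a direct consequence of H\"older continuity. The first piece is the real content, and it reduces to a bookkeeping check applied to the general convergence result already proved in Theorem~\ref{thm:consensus_conv}.

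For the convergence, I would invoke Theorem~\ref{thm:consensus_conv} coordinatewise (using the fact, established earlier, that a scalar consensus matrix defines a vector consensus algorithm). This gives
\begin{align*}
\lim_{k\to\infty} r^i(k) = \frac{1}{N}\sum_{j=1}^N x^j(0).
\end{align*}
Then I would substitute the prescribed initialization \eqref{eq:initialization1}: each $x^j(0)$ is the vector in $\mathbb{R}^N$ whose $j$-th entry is $N u_j(0)$ and whose other entries are zero. Summing over $j$, the factor $N$ cancels and the $j$-th coordinate of $\frac{1}{N}\sum_j x^j(0)$ is exactly $u_j(0)$. Hence $\bar r = [u_1(0), \dots, u_N(0)]^T$, as claimed. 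This also implicitly fixes the ambient dimension to be $d = N$, which should be noted explicitly.

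For the inequality, the H\"older assumption states precisely that $|f(x) - f(y)| \leq C\|x - y\|^\alpha$ for all $x, y$ in the domain; applying this with $x = r^i(k)$ and $y = \bar r$ yields
\begin{align*}
|f(r^i(k)) - f(\bar r)| \leq C \|r^i(k) - \bar r\|^\alpha
\end{align*}
immediately. The Lipschitz case is just $\alpha = 1$.

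The only real ``obstacle'' is a presentational one: making sure that the convergence result of Theorem~\ref{thm:consensus_conv}, which is written for a single numerator state, is applied correctly to each of the $N$ coordinates of $x^i(0)$ and that the specific structure of the initialization produces the identity function rather than the average of the $u_j(0)$. Once the coordinate computation is written out, the H\"older bound follows in one line, so the proof as a whole is short.
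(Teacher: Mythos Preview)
Your proposal is correct and matches the paper's own proof essentially line for line: invoke Theorem~\ref{thm:consensus_conv} to obtain $\bar r = \frac{1}{N}\sum_j x^j(0)$, plug in the initialization \eqref{eq:initialization1} to see this equals $[u_1(0),\dots,u_N(0)]^T$, and then apply the definition of $\alpha$-H\"older continuity. Your added remarks (that $d=N$ here, and the explicit coordinate bookkeeping) are slightly more detailed than the paper but do not change the argument.
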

\begin{proof} The proof follows from Theorem~\ref{thm:consensus_conv}. In particular, by Theorem~\ref{thm:consensus_conv} under Assumptions~\ref{ass:StrConn} and~\ref{ass:colSto} the updates~(\ref{eq:ratioState}) converges to $\overline{r}:= \lim \limits_{k \rightarrow \infty} \frac{1}{y_i(k)}x^i(k) = \frac{1}{N}\sum\limits_{j=1}^N x^j(0)$ for all $i \in \{1,...,N\}$. With the initialization~(\ref{eq:initialization1})-(\ref{eq:initialization3}) the limiting value $\overline{r}$ is given by
\begin{align*}
    \overline{r} = \frac{1}{N} \sum_{i=1}^N x^i(0) = [u_1(0) \ u_2(0) \dots u_N(0)]^T.
\end{align*}
To complete the proof one only needs to apply the definition of $\alpha$-H\"older continuity, that $|f(x) - f(y)| \leq C \| x- y\|^\alpha$, to the estimate $r^i(k)$ and the consensus value $\bar{r}$. \hspace{1.7in} \end{proof}

\begin{remark}
Theorem~\ref{thm:func_cal} guarantees that for large enough values of $k$ the agents following update rules~(\ref{eq:numeratorState})-(\ref{eq:ratioState}) will have enough information to calculate any arbitrary function of the initial values. Moreover, unlike the existing methods in the literature \cite{sundaram2008distributed} which require carefully designed matrices based on the global information of the network the proposed scheme allows for distributed synthesis. Further, the finite-time terminated protocol discussed here is applicable for arbitrary time-invariant connected directed graphs unlike the stringent assumptions required for applicability of some schemes in the literature, see \cite{kingston2006discrete}.
\end{remark}

\section{Results}\label{sec:results}

In this section, we present simulation results to demonstrate finite-time stopping criterion for high-dimensional ratio consensus. A network of 25 nodes is considered which is represented by a randomly generated directed graph (see Fig.~\ref{fig:graph25Nodes}(a)) with diameter $6$. Here the numerator state is chosen to be a 10-dimensional vector and selected randomly for every node. Equation (\ref{eq:numeratorState}), (\ref{eq:denominatorState}) and (\ref{eq:ratioState}) are implemented in MATLAB and simulated. 2-norm of each node's ratio state is plotted in Figure~\ref{fig:graph25Nodes}(b) achieving convergence in $60$ iterations. 

\begin{figure}[h]
    \centering
    \begin{tabular}{cc}
    \includegraphics[scale=0.5, trim={5cm 8cm 4.5cm 9cm}, clip]{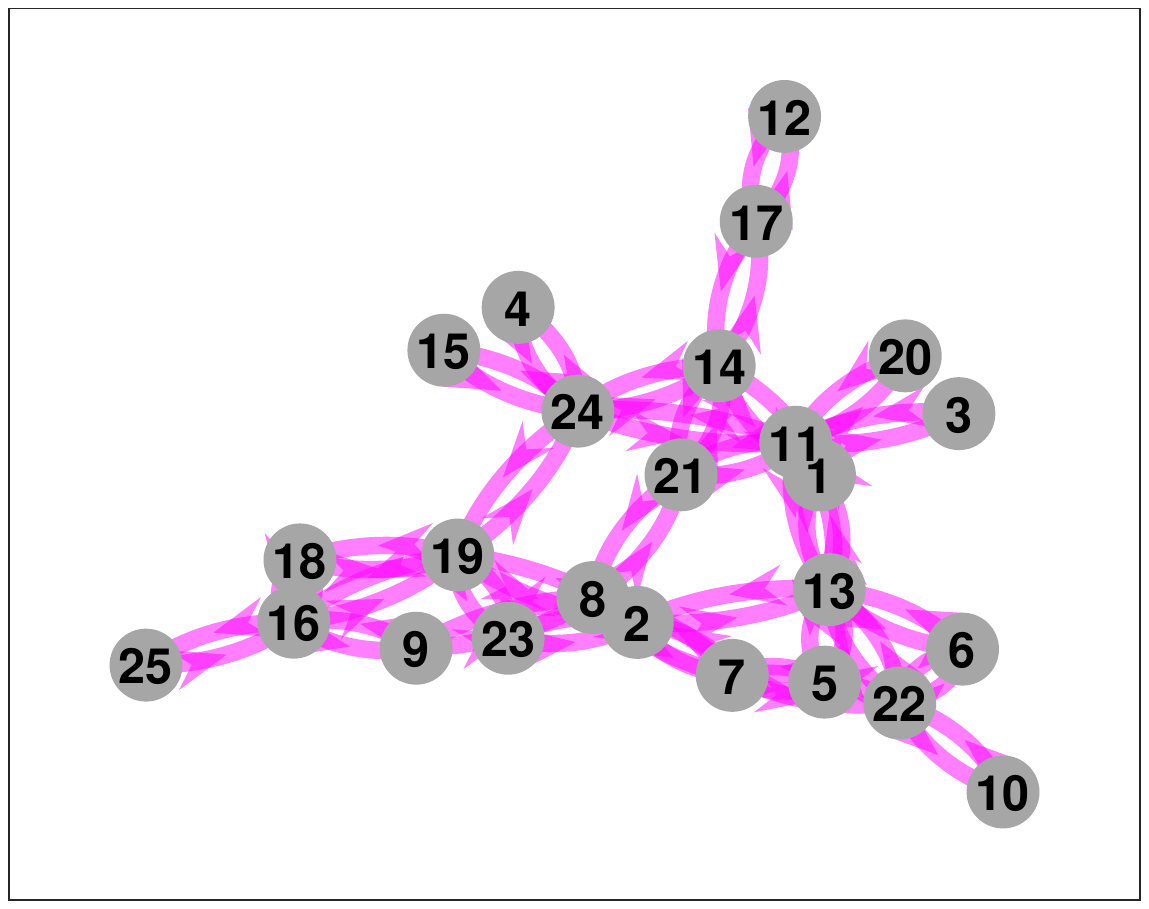} & \includegraphics[scale=0.5, trim={3.5cm 8cm 4.5cm 9cm}, clip]{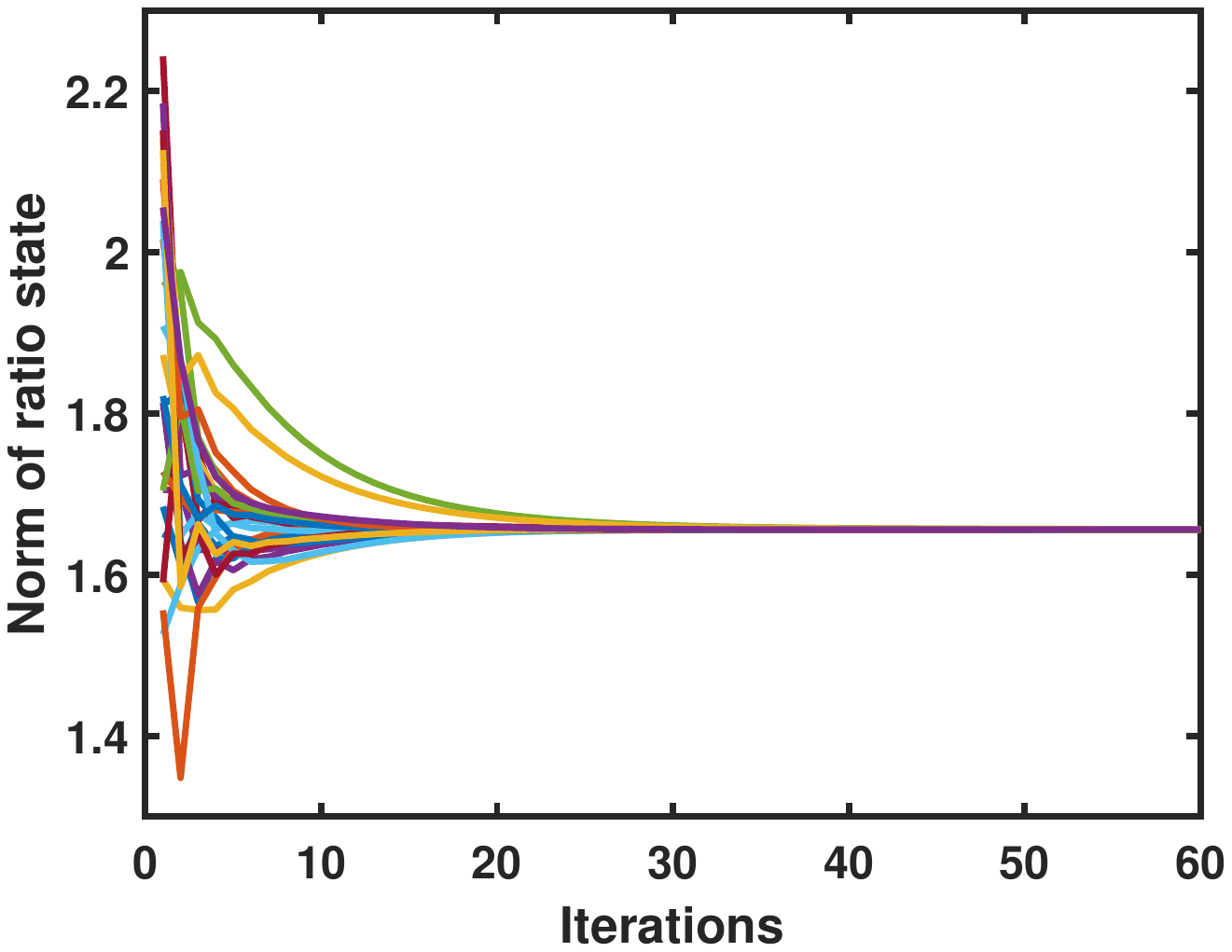}\\
        (a) & (b)
    \end{tabular}
    \caption{(a) A communication network represented by a $25$ node directed graph. (b) 2-norm of 10-dimensional ratio states of all the nodes (25) in the network.}
    \label{fig:graph25Nodes}
\end{figure}



Algorithm~\ref{alg:radiusAlg} is implemented in MATLAB and the radius $\overline{R_i}(l)$ for all $i \in \V$ is plotted in Fig.~\ref{fig:radiusNodes}(a). Here, it can be seen that radius comes under some pre-specified tolerance ($0.0166$, $1\%$ of the norm of the consensus vector) within 60 iteration and is used as a stopping criterion by each node. Fig.~\ref{fig:radiusNodes}(b) plots the two dimensional projection of the $norm$ ball $B\{\overline{R_i}(l),r^i(lD+D)\}$ for node $1$ as $l$ progresses over time. As expected, with increase in $l$, balls shrink in size. Similar observation is seen for all the other nodes as well.
The above illustration demonstrates how Algorithm~\ref{alg:radiusAlg} can be successfully used as finite-time termination criterion for distributed ratio consensus. 

\begin{figure}[h]
    \centering
    \begin{tabular}{cc}
    \includegraphics[scale=0.5, trim={3.5cm 8cm 4.5cm 9cm}, clip]{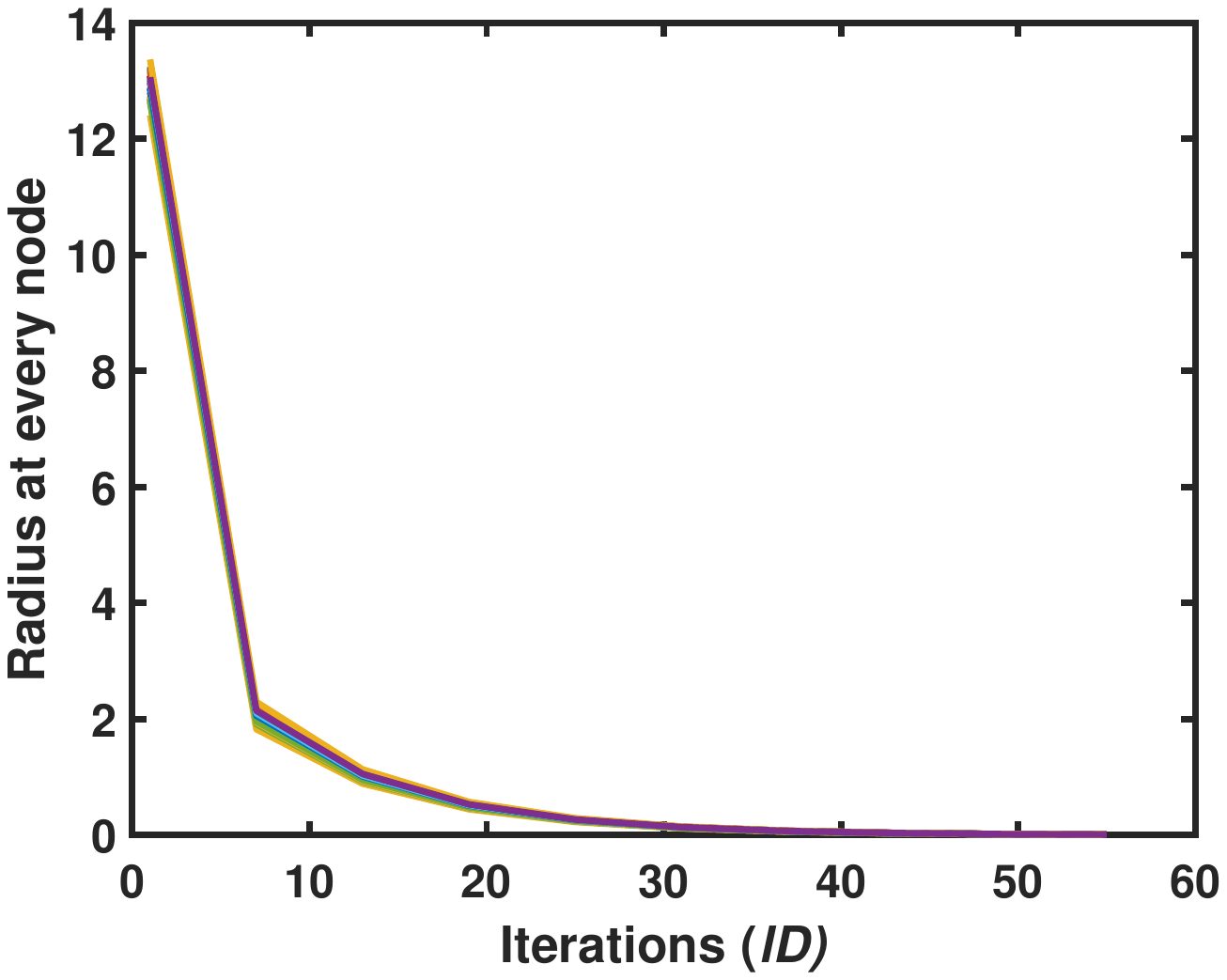} & 
    \includegraphics[scale=0.5, trim={3.5cm 8cm 4.5cm 9cm}, clip]{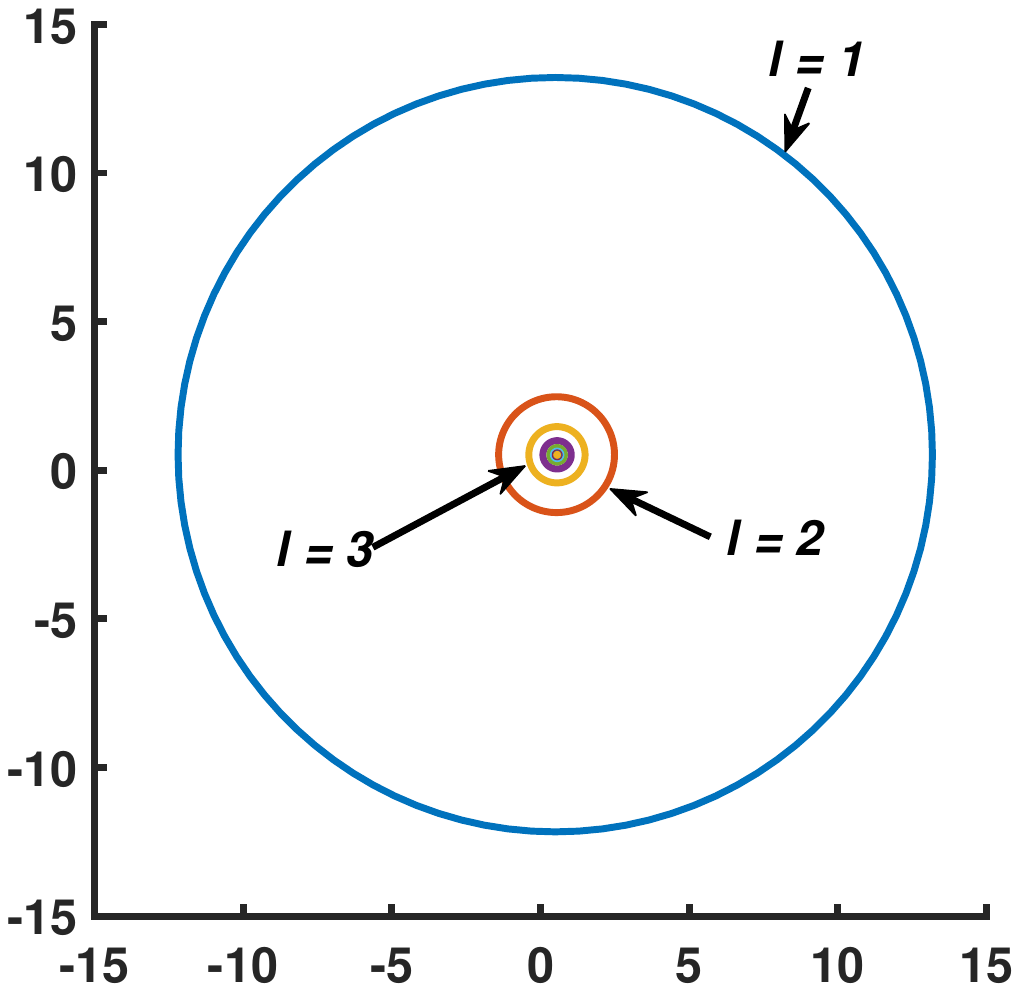}\\
        (a) & (b)
    \end{tabular}
    \caption{(a) Radius $\overline{R_i}(l)$ at each node. (b) 2-dimensional projection of norm balls for node $1$ with changing $l$.}
    \label{fig:radiusNodes}
\end{figure}

 

\section{Conclusion}\label{sec:Conclusion}
In this article, we presented a notion of monotonicity of network states in vector consensus algorithms, which we called a convex decreasing consensus algorithm. We showed that this property can be used to construct finite-time stopping criterion and provided a distributed algorithm. We further provided an algorithm which calculates an approximation of minimum norm balls which contain all the network states at a given iteration. Radius of these balls was shown to converge to zero, and algorithm was presented to use that as a finite-time stopping criterion. This algorithm was shown to have much smaller communication requirement compared to existing methods.
The effectiveness of our algorithm is validated by simulating a vector ($\in \mathbb{R}^{10}$) ratio consensus algorithm  for a network graph of 25 nodes.  Further we demonstrated how these stopping criteria could be applied to provide guarantees on the convergence of Least Squared Estimator approximation through consensus.

\bibliographystyle{ieeetr} 

\bibliography{topident}

\begin{IEEEbiography}[
{
\includegraphics[width=1in,height=1.25in,clip,keepaspectratio]{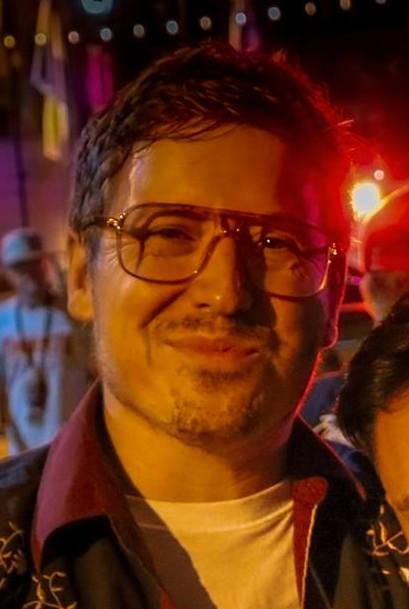}
}
]
{James Melbourne} received his Bachelors in Art History in 2006 and a Masters in Mathematics in 2009 both from the University of Kansas, and his PhD in Mathematics in 2015 at the University of Minnesota.   He was a postdoctoral researcher in the University of Delaware Mathematics department from 2015 to 2017, and is currently a postdoctoral researcher at the University of Minnesota in Electrical and Computer Engineering.  His research interest include convexity theory, particularly its application to probabilistic, geometric, and information theoretic inequalities, consensus algorithms, and stochastic energetics.

\end{IEEEbiography}

\begin{IEEEbiography}[
{
\includegraphics[width=1in,height=1.25in,clip,keepaspectratio]{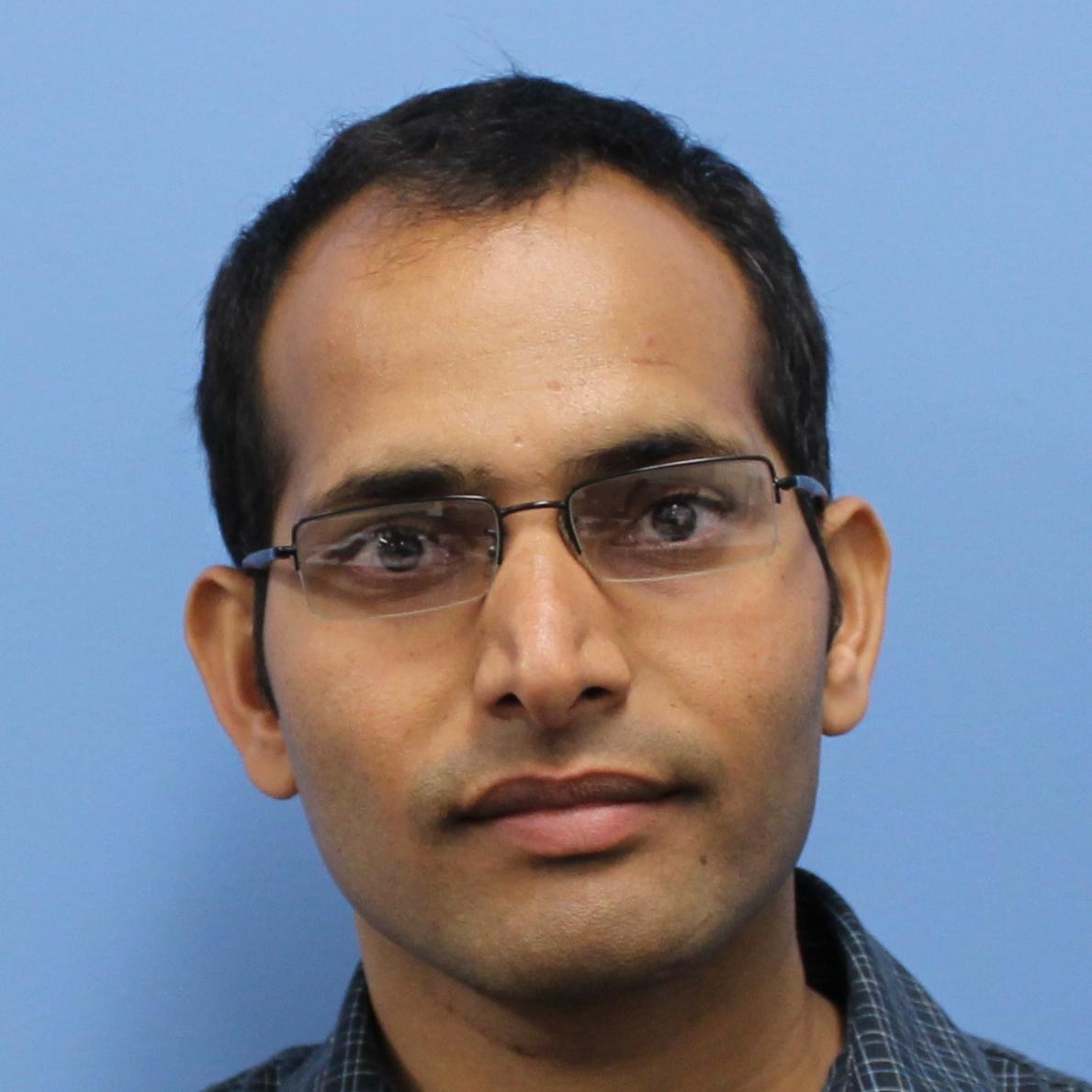}
}
]
{Govind Saraswat} received his B.Tech degree in Electrical Engineering from the Indian Institute of Technology, Delhi, in 2007 and his PhD degree in Electrical Engineering from University of Minnesota, Twin Cities in 2014. Currently, he is part of Sensing and Predictive Analytics group at National Renewable Energy Laboratory, Golden, CO (NREL) where he works on data-driven technology for energy systems planning and operation. His research includes power system modeling and analysis, measurement-based operation and control, machine learning, and optimization.
\end{IEEEbiography}

\begin{IEEEbiography}[
{
\includegraphics[width=1in,height=1.25in,clip,keepaspectratio]{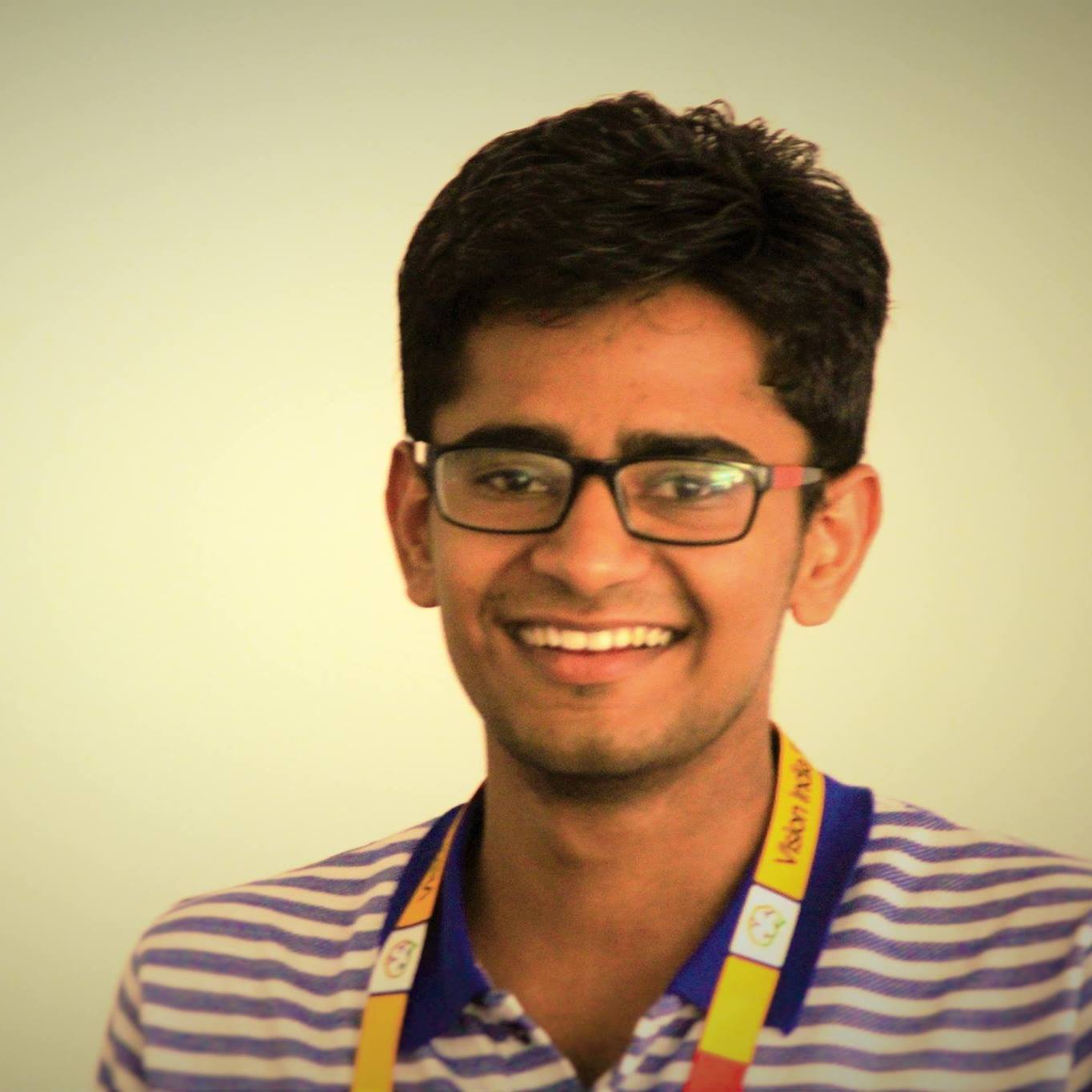}
}
]
{Vivek Khatana} received the B.Tech degree in Electrical Engineering from the Indian Institute of Technology, Roorkee, in 2018. Currently, he is working towards a Ph.D. degree at the department of Electrical Engineering at University of Minnesota. His Ph.D. research interest includes distributed optimization, consensus algorithms, distributed control and stochastic calculus. 
\end{IEEEbiography}

\begin{IEEEbiography}[
{
\includegraphics[width=1in,height=1.25in,clip,keepaspectratio]{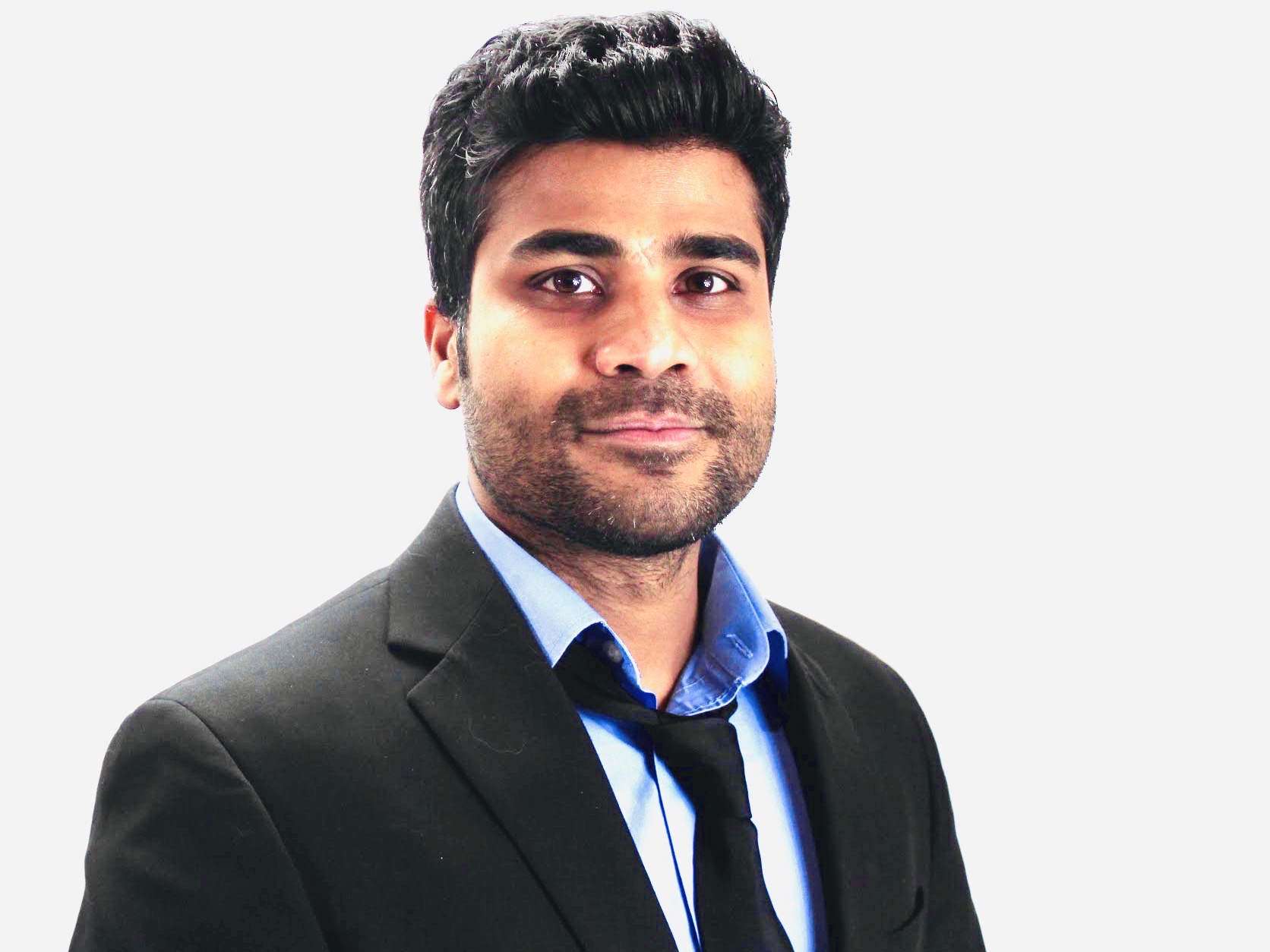}
}
]
{Sourav Kumar Patel}
received his B.Tech. degree in Instrumentation and Control Engineering from the National Institute of Technology, Jalandhar, in 2011. In the same year, he joined National Thermal Power Corporation Ltd. in Kaniha, Odisha, India as a Control and Instrumentation Engineer. He received his M.S. degree in Electrical Engineering from the University of Minnesota, Twin-Cities, in 2018 where currently, he is working towards his Ph.D. degree in Electrical Engineering. His research interests include control and systems theory, coordination and communication protocols for Distributed Energy Resources towards smart grid applications.
\end{IEEEbiography}

\begin{IEEEbiography}[
{
\includegraphics[width=1in,height=1.25in,clip,keepaspectratio]{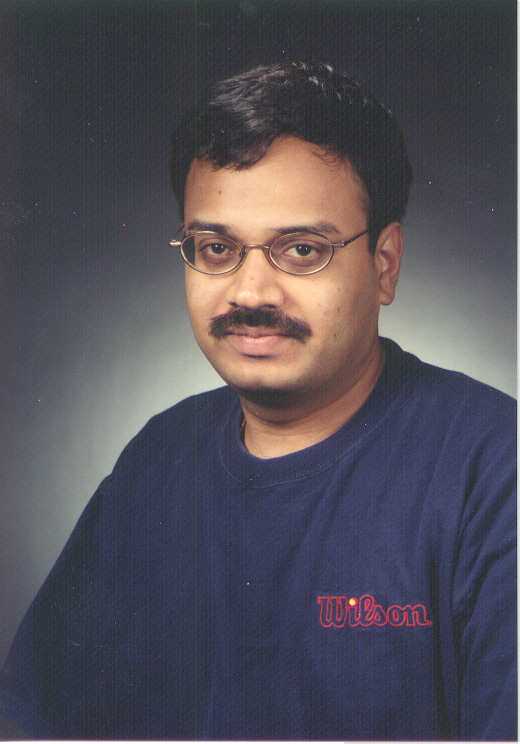}
}
]
{Murti V. Salapaka}

received the B.Tech. degree in Mechanical Engineering from the Indian Institute of Technology, Madras, in 1991 and the M.S. and Ph.D. degrees in Mechanical Engineering from the University of California at Santa Barbara, in 1993 and 1997, respectively. He was a faculty member in the Electrical and Computer Engineering Department, Iowa State University, Ames, from 1997 to 2007. Currently, he is the Director of Graduate Studies and the Vincentine Hermes Luh Chair Professor in the Electrical and Computer Engineering Department, University of Minnesota, Minneapolis. His research interests include control and network science, nanoscience and single molecule physics. Dr. Salapaka received the 1997 National Science Foundation CAREER Award and is an IEEE fellow.

\end{IEEEbiography}

\end{document}